\newcommand{\E}{\mathbb{E}}
\newcommand{\rmk}[1]{{\color{blue} #1}}
\newtheorem{thm}{Theorem}[section]
\newtheorem{prop}[thm]{Proposition}
\newtheorem{lem}[thm]{Lemma}
\newtheorem{defi}[thm]{Definition}
\newtheorem{rem}[thm]{Remark}
\def\Esp{{\mathbb{E}}}
\def\calJ{{\cal J}}
\def\calQ{{\cal Q}}
\def\cqfd{$\square$}
\def\LB{\lambda_{\rm buy}}
\def\LS{\lambda_{\rm sell}}
\def\RB{R_{\rm buy}}
\def\RS{R_{\rm sell}}
\def\PB{p^{\rm buy}}
\def\PS{p^{\rm sell}}
\def\one{\mathbf{1}}
\def\CP{$C\!\rightarrow\!P$}
\def\PC{$P\!\rightarrow\!C$}
\titlerunning{Efficiency of the Price Formation Processs: a Mean Field Game Analysis}
\authorrunning{A Lachapelle, JM Lasry, CA Lehalle, PL Lions}
\title{Efficiency of the Price Formation Process in Presence of High Frequency Participants: a Mean Field Game Analysis } 
\author{  Aim\'e Lachapelle${}^*$,  Jean-Michel Lasry${^+}$, Charles-Albert Lehalle$\dagger$ and Pierre-Louis Lions$\ddagger$}
\institute{%
${}^*${MFG Labs},
${^+}${CEREMADE and Cr\'edit Agricole Corporate and Investment Bank},
$\dagger${Capital Fund Management (corresponding author, {\sf charles-albert.lehalle@cfm.fr}, tel: +33(0) 661 439 274)},
$\ddagger${CEREMADE and Coll\`ege de France}
} 
\date{}
\begin{document}
\maketitle


\abstract{This paper deals with a stochastic order-driven market model with waiting costs, for orderbooks with heterogenous traders. Offer and demand of liquidity drives price formation and traders anticipate future evolutions of the orderbook. The natural framework we use is mean field game theory, a class of stochastic differential games with a continuum of anonymous players. Several sources of heterogeneity are considered including the mean size of orders. Thus we are able to consider the coexistence of Institutional Investors and High Frequency Traders (HFT). We provide both analytical solutions and numerical experiments. Implications on classical quantities are explored: orderbook size, prices, and effective bid/ask spread. According to the model, in markets with Institutional Investors only we show the existence of inefficient liquidity imbalances in equilibrium, with two symmetrical situations corresponding to what we call {\it liquidity calls for liquidity}. During these situations the transaction price significantly moves away from the fair price. However this macro phenomenon disappears in markets with both Institutional Investors and HFT, although a more precise study shows that the benefits of the new situation go to HFT only, leaving Institutional Investors even with higher trading costs.}

\vspace{1cm}
{\bf Keywords: }orderbook modeling, mean field games, order-driven market, waiting cost, liquidity equilibrium, high frequency trading.
{\bf JEL codes:} C730 (Stochastic and Dynamic Games), G140 (Information and Market Efficiency)

\paragraph{Acknowledgments.}
{This work has been partially granted by the Cr\'edit Agricole Cheuvreux Research Initiative in partnership with the Louis Bachelier Institute, the Coll\`ege de France and the Europlace Institute of Finance. Authors thank Ioanid Ro\c{s}u for fruitful discussions about the orderbook model.} 

\section{Introduction}
With the recent changes in regulation on financial markets (MiFID, 2007, in Europe and Reg NMS, 2005, USA) the competition across trading venues favored the appearance of new trading rules, in a global attempt to capture most of the decreasing liquidity available in the post-2008 financial crisis world.
Trading venues thus proposed innovative ways to trade in electronic orderbooks (that have the favor of regulators and policy makers because of their native traceability):
\begin{itemize}
\item tiny tick sizes (i.e. the minimum price change between two consecutive quotes \cite{citeulike:5426005}) to attract automated market orders using SORs (Smart Order Routers \cite{citeulike:5177512}, \\ \cite{FOU06}),
\item low latency networks and matching engines, to allow high frequency players to decrease their exposure to market risk, in attempts to give them incentive to provide more liquidity \cite{citeulike:10491580},
\item maker/taker fee schedules to pay Liquidity Provider orders inserted in orderbooks, in order to attract liquidity,
\item creation of Dark Pools of various kinds (see \cite{citeulike:7500879}), to promote anonymous liquidity seeking so that large investors can continue to exchange blocks in an electronic manner,
\item size-priority and pro-rata matching rules \cite{citeulike:10881873} to complement the usual time-priority models,
\end{itemize}
are among these changes in market microstructure.

\bigskip
\noindent
The analysis of the efficiency of the emerging ecology of partially connected trading pools is questioned, especially since the flash crash \cite{citeulike:10491580}, \cite{citeulike:8676220} during which the US equity market has lost around 10\% of its value in 10 minutes, regaining it in 20 minutes. The resiliency of the liquidity provided by HFT (High Frequency Traders \cite{citeulike:8423311}) raised concerns. So did the dispersion of liquidity on such an heterogeneous network of pools.\\
Addressing these points is difficult because the market microstructure is not only a set of trading rules that could be studied statically, it changes with market participants behaviors, each of them trying to optimize her own utility function and anticipating others' moves (see \cite{citeulike:12047995} and \cite{citeulike:12099185} for more details).\\
This article provides an order-driven market modeling, where the volume of arriving flows is the risk source and where the key driver is the demand/offer of liquidity. \\
Yet only a limited number of papers have explored such models, the most notables being \cite{citeulike:6253089} (modeling the orderbook queue dynamics) and \cite{citeulike:12303370} (empirically studying orderbook data to extract the main components of the dynamics). The present paper can be seen as a very good complement to those two very interesting ones: with an accurate economic modeling on the one hand, and empirical results on the other hand. The MFG approach links them together since we provide for instance Partial Derivative Equation formulations (compatible with the Fokker-Planck equation described in the second paper) arising from a structural modeling (compatible with the modeling of the first one). The problem is very complex, and the topic deserves for more studies and publications. Indeed, such liquidity models involve a very large number of traders who arrive and leave the system at different times and strategically interact. Such components lead necessarily to complex situations. 

\bigskip
\noindent
In our model we consider smart traders (we call them {\it players} as soon as we use the game theory environment) that arbitrate between limit and market orders. That is they have to choose between the immediate transaction price and expected later transaction prices.  The dynamic model is in continuous time, in infinite horizon. Since patience is at the heart of our model, the present approach belongs to the family of waiting cost order-driven market models. Closely related papers are the work of \cite{fou05}, and the more recent paper by \cite{citeulike:6253089}. The former is the seminal waiting cost based model (as opposed to asymmetry information models) in discrete time. The latter is a continuous time approach where traders have the possibility to cancel their orders for free. This late assumption greatly simplifies the problem and allows the author to describe the equilibrium in an elegant manner. \\
Like in \cite{citeulike:6253089}, we use a continuous time model with Poisson processes used to model newcomers' arrivals. 
Nevertheless our model present some important dissimilarities. First the patience structure of traders is more endogenous since no cancellation of orders are permitted. Choices made by the players are thus irreversible and traders' anticipations of future events become a core issue. Their is a deep impact on the equilibrium equations: the problem becomes nonlinear. Secondly, the goal of our paper is to study the case with heterogenous traders, in particular to model the interactions of Institutional Investors and High Frequency Traders. In \cite{citeulike:6253089} several types of traders are considered, but the strategical arbitrage between market and impact orders is allowed only for one of the types. This is not the case in our model, where all types make choices.
\bigskip
\noindent
Game theory is necessary as soon as markets are incomplete. When markets are complete, strategy is unnecessary and the only task agents have to perform is to optimize in regards to the price. Order-driven markets are by essence incomplete since the source of risk is the random arrival of traders, and it is impossible to hedge this risk because choices made by traders are irreversible (note that the more realistic modeling where the modeler considers costly cancellation of orders must lead to a similar incompleteness).
Consequently we are convinced that game theory offers a proper framework.

Mean Field Games (MFG monotone systems, as detailed in the next section) are the suitable class of games that naturally allow to take into account the specific components of the order-driven market we consider, that is: a continuum of anonymous players, irreversibility of the actions, recursivity (anticipation of future prices). 
The resulting dynamics is thus a mix of backward-driven behaviors (based on actualized anticipations of future values of trades) and forward-driven ones (resulting from the immediate actions taken by agents). 
The MFG framework has been built to capture this two way dynamics, therefore this paper uses it to render the dynamics of a stylized orderbook, allowing to obtain results on different market configurations.
In the paper we introduce a new kind of mean field games in which players take one strategic decision at their arrival into the game.

\bigskip
\noindent
The paper is organized as follows. In section \ref{sec:MFG} we provide a quick introduction to Mean Field Game theory. In section \ref{sec:model} we introduce the modeling approach. We start with a one-sided orderbook as a base camp towards the two-sided orderbook exposed later. Section \ref{sec:pfp} and \ref{sec:eqanal} are dedicated to the introduction and  theoretical study of the recursive equations characterizing the equilibrium. Finally we conclude the paper in section \ref{sec:appli}, where we apply the model to several markets: markets with Institutional Investors only {\it versus} markets with both Institutional Investors and High Frequency Traders. 


\section{Mean Field Games: a quick introduction}\label{sec:MFG}
Mean Field Games (MFG for short) are a class of stochastic differential games with a continuum of agents. \\They have been introduced by \cite{citeulike:3614137}. Similar ideas have been introduced from an engineering viewpoint by \cite{HCM} and \cite{AJW}. From then on, MFG have known numerous developments and applications to various fields, mainly in economics \cite{lucasmoll,MFG_PP}, statistics \cite{PASG_NE,PASG_UL}, and human crowd behaviors \cite{LW}. The mathematics and numerics of MFG have been widely developed. Most of the mathematical tools for MFG have been the purpose of a 5 years course at Coll\`ege de France \cite{PLL_CDF}, and recent developments are described in \cite{CLLP} from an analysis viewpoint and \cite{CDL} and \cite{citeulike:12478248} with a probabilistic approach.

\bigskip
\noindent
In the continuum, agents are atomized, which means that their influence on the global state is reduced to nil. In economics, this aspect has to be linked with the notion of {\it price taker} agents as opposed to the case of a {\it price maker} monopolist for instance. This is precisely the whole continuum that makes the equilibrium. \\
The nil influence may have other sources than the presence of infinitely many players in the game. Indeed, a game with stochastic continuous entries and exits of players leads to the same property. This will be in particular the case of the model we propose in the present paper, where we consider Poisson entries and exits. 

\bigskip
\noindent
The information consists of a measure on the space of states $S$. Being a measure, it is often denoted by $m$ in the literature, but to be consistent with the notations of the model developped in the next sections, we rather call it $x$. Then $x(s)$ quantifies the density of agents having state  $s$.\\
In a MFG, players individually optimize (by choosing actions) their expected pay-off, considering the evolution of the global dynamic of the collectivity as an observable parameter (and they {\it anticipate} its evolution). Simultaneously, the statistical evolution of the collective dynamic follows from the individual optimal behaviors. The equilibrium takes place as soon as the anticipated evolution coincides with the statistical evolution.

\bigskip
\noindent
A core characteristic of Mean Field Games is that they are anonymous games. This notion is well known and means that the game is invariant for any permutation of the players. In other words, the players are not labelled. This assumption is very natural in complex systems involving numerous players.\\
Mean Field Games are approximations of anonymous games with finitely many players. But things are getting much simpler in MFG. The strategical powerlessness of individuals (i.e. the atomized characteristic of players) dramatically shrinks the traditional complexity (materialized by numerous coupling of the equilibrium equations) of $N$-player games, which is well-known as being Achilles' heel of classical stochastic differential games. Players interact with others only {\it via} the global state of the collectivity.

\bigskip
\noindent
In $N$-player stochastic differential games, each player $i$ optimizes her value function $u_i$, depending upon every individual states of agents (including herself). The equilibrium is then characterized by a complex system of coupled differential equations. \\
In a Mean Field Game, the $N$ value functions become a single value function $\mathcal{U}$ depending upon the the state $s$ of a generic player and the density $x$ of the continuum.\\
The MFG equilibrium is then characterized by a {\it master equation} verified by $\mathcal{U}$.
The master equation is in general very tricky and mathematically challenging. Their is a natural classification of cases in term of risk structure.
\begin{itemize}
\item Individual risk: in this case, the stochasticity of each player's dynamic is independent of each other. This particular case was firstly introduced. A major simplification is that the value function does not depend on the density $x$, but only on the state $s$. Consequently the master equation reduces to a system of two coupled partial differential equations having a forward-backward structure. The dynamic of the collectivity is deterministic.
\item Shared risk: here the only risk that agents face is common to all of them. When agent's space $S$ is finite (that is $x:= (x_1, ..., x_M)$), then the value function can be discretized 
\[\mathcal{U}:= (u_j), \; u_j(x_1,\dots,x_M), \; j=1,\dots,M.\]
This class of cases have been deeply investigated and is referred to as the case of {\it monotone systems} (see \cite{PLL_CDF}). The monotone system takes the following form:
\begin{equation}\label{eq:MFGsyst}
0=-r u_j - \sum _{k=1}^N \alpha_k(u,x) \frac{\partial u_j}{\partial x_k}+\beta_j(u,x), \; \mbox{ for } j=1,...,M, 
\end{equation}
where
\begin{equation*}
\begin{split}
& u \rightarrow \alpha_j(u,x) \mbox{ is monotone for all } j\\
& u \rightarrow \beta_j(u,x) \mbox{ is monotone for all } j.
\end{split}
\end{equation*}
We will see later the PDE of our orderbook model falls into this class of MFG.\\
Note that there is also a time dependent version of (\ref{eq:MFGsyst}) with a time derivative term added.
\item Mix models: some classes of cases that mix both shared and individual risks are needed for economic modeling (e.g. for solving the Krussel-Smith problem \cite{citeulike:6055723}). 
\end{itemize}

\section{Model}\label{sec:model}

The stylized orderbook used here is a two-sided one.
We start with a simple single-queue model as a base camp towards the two-sided one that is exposed later.

\subsection{A simple single-queue model with anticipations}\label{sec:onesided}
The purpose of introducing first a single queue model is didactic and does not aim at directly providing insights on orderbook modeling. However we believe this single queue is the occasion to introduce some key concepts, such as endogenous strategic entries of agents that anticipate the future. Consequently, sellers entering the system are also called players since we locate the modeling approach in the game theoretic framework (agents perform actions optimizing their respective pay-off).\\
In particular, when new sellers arrive, they look at the queue size and decide whether to enter the queue or not (action), after considering their expected pay-off (value function assessment).

\bigskip
\noindent
With this simplified model we introduce anticipatory behaviors  in a very stylized one-sided orderbook, where patient sellers arrive at exogenous Poisson rate and where the arrival rate of impatient buyers increases as soon as the queue size increases. We will finally use it to provide insights on the modeling of distinct execution protocols, namely process sharing and First In First Out protocol.

\paragraph{The model. } 
The arrival rate of players is continuous and stochastic. In this simplified model, it is exogenous. 
\begin{itemize}
\item As usual, they arrive following a Poisson process with intensity $\lambda$.
\item Impatient buyers arrive at rate $\mu(x)\geq 0$, a given increasing function of $x$; i.e. the more patient sellers in the queue, the higher arrival rate of impatient buyers.
\item The unit size of an order in the queue is $q$. The queuing discipline is a process sharing one (with no priority), i.e. individual service in a queue of size $x$ is worth $q/x$. In terms of trading rules, one may think about a pro-rata one \cite{Field2008rata}.
\item The pay-off gained by a player per unit of order is a nonnegative decreasing function of the queue size: $P(x)$. Typical cases are $P(x) := \, p > 0$ and $P(x) = 1/x$. On the other hand, there is a cost $c$ of waiting in the queue. 
\end{itemize}
 
 \smallskip
 \noindent
Now, as usual in game theory, there is a value function $u$ for any player. The value function depends upon the queue size $x$. It is the expected Profit \& Loss (P\&L) of a player entering the queue. Note that we assume that agents are risk neutral and that their reservation utility is set to 0, which means that an agent decides to enter the queue as soon as the value function is positive: $u(x) > 0$.
\\

\noindent
The value function dynamic comes from an infinitesimal expression of events impacting it:
\begin{itemize}
\item a newcomer enters the queue as soon as $u(x)>0$ (remember $u$ is the ``\emph{expected value received if you enter the queue}'').
\item in the scope of this toy model, the queue is consumed by an exogenous Poisson process of intensity $\mu(x)$.
Each time an order already waiting in the queue is partially executed (according to a prorata rule): its owner will sell $q/x$ shares for a price $P(x)$. The new expected value for a participant waiting in the queue in this case is thus $q/x\cdot P(x) + (1-q/x)\cdot u(x-q)$ (i.e. the first part of the expression comes from the sell of $q/x$ shares and the second one from the expected value of the queue that is now of size $x-q$).
\item in all other cases, the expected value does not change.
\item the waiting cost is proportional to $q$ (the size of the orders); it decreases the expected value of $u$ by $c\,q\,dt$, where $dt$ is the time unit.
\end{itemize}

\paragraph{A MFG formalization: 1. The Control.}
A subtle aspect of the MFG is players enter into the game following $N^\lambda$, and take the decision to stay in the queue or to leave the game (paying a reference price of zero). 
The natural notation would be 
\begin{itemize}
\item the index of an anonymous player $i$ is $i:=N^\lambda$,
\item its control $\delta^i$ it naturally deduced from its value function $u_i(x)$ (the value of staying in the queue). As soon as the value of staying in the queue is greater than paying the reference price, it solves the control problem of the agent:
  \begin{equation}
    \label{eq:ctrl:1D}
    {\cal U}^i(x_t):= \max_{\delta^i\in\{0,1\}} \delta^i u_i(x).
  \end{equation}
The solution is hence easy to express:
$$\delta^i=\one_{\{u_i(x)>0\}}.$$
\item note that $x$, the size of the queue, is our \emph{mean field}. Thanks to it $u(x)$ is shared by all the players.
\end{itemize}
The mean field $x_t$ evolves according to a stochastic differential equation:
\begin{equation}
  \label{eq:dx:1D}
  dx_t = \left( dN^{\lambda }_t \delta^i_t - dN^{\mu(x)}_t\right)q,
\end{equation}
with the notations $dN^{\mu(x)}_t$ for the queue-consuming point process.

We see the mean field dynamics involves the value function $u_i$ and no more the control once we inject the solution of the control problem in it (namely $\delta^i=\one_{\{u_i(x)>0\}}$):
$$dx_t = \left( dN^{\lambda }_t \one_{\{u_i(x)>0\}}- dN^{\mu(x)}_t\right)q.$$


\paragraph{A MFG formalization: 2. Definition of the cost function.}
The value function the $i$th agent wants to minimize is driven by the following running cost
\begin{equation}
  \label{eq:dJ:1D}
  dJ(x_t)=\left[ \frac q x_t P(x_t) +(1 - \frac q x_t ) J(x_t-q) \right] dN^{\mu(x)}_t - cq\, dt.
\end{equation}
  The additive waiting costs are compatible with the very short time scale having a sense for orderbook dynamics\footnote{  It can be noted here that another cost function $\calJ$ could be defined here as:
  $$d\calJ(x_t)=\left[\omega(q,x_t) P(x_t) +(1 -\omega(q,x_t) ) \calJ(x_t-q) \right] dN^{\mu(x)}_t - cq\, dt,$$
  where $\omega(q,x_t)$ is a random variable taking value 1 with a probability $q/x$ and 0 otherwise. 
  In such a case, instead of a prorata rule, we will have a trading rule for which an order is fully executed with a probability $q/x$, or not at all. This case covers the trading model of \cite{citeulike:6253089}, in which the orderbook matching rule is FIFO (First In, First Out), but any agent can modify and reinsert his order at any time. In such a case the probability for one specific agent to be first in the queue (and thus be fully filled), is $q/x$.   

Since $\Esp dJ = \Esp d\calJ$, the emerging dynamics are the same.}.

With such a formalism, the value function can be defined as 
$$u_i(X)=\Esp \int_{t=0}^{T} J^u_i(x_t)\, dt$$
given $x_0=X$, with $T$ ``large enough'' at the intraday time sale.
Remind that in our specific class of MFG, the identity $i$ of the agent and the time are bound: $i=N^\lambda$, meaning players come into the game according to the point process $N^\lambda$.

\paragraph{A MFG formalization: 3. Expression of the mean field.}
Thanks to the mean field $x_t$, the value function can be anonymized:
$$u(x):=u_i(x)=\Esp \int_{t=0}^{T} J^u(x_t)\, dt, \; \forall i.$$

And thus the dynamics of the mean field is agent-agnostic too:
$$dx_t = \left( dN^{\lambda }_t \one_{\{u(x)>0\}}- dN^{\mu(x)}_t\right)q.$$

\paragraph{A MFG formalization: 4. Stationary equilibrium as a fixed point of the value function.}
Thanks to the previous steps we now look for the stationary value of $u$.
Below we detail the equilibrium equation for each probability event, giving birth in few paragraph to an ordinary differential equation describing the value function.

\begin{align}\label{eq:uSQM}
u(x) = &\;\; (1-\lambda \one_{\{u(x)>0\}}dt - \mu(x) dt) \cdot u(x) && \leftarrow \mbox{ \textcolor{blue}{nothing happens}}  \\
&+ \lambda \one_{\{u(x)>0\}} dt \cdot u(x+q) && \leftarrow \mbox{ \textcolor{blue}{new queue entrance}} \nonumber \\
&+ \mu(x) dt \cdot \Big( \frac q x P(x) +(1 - \frac q x ) u(x-q)\Big) && \leftarrow \mbox{ \textcolor{blue}{service}} \nonumber\\
& - c q\, dt && \leftarrow \mbox{ \textcolor{blue}{waiting cost}} \nonumber
\end{align}
 
 \noindent
We can perform a Taylor expansion for small $q$ in the discrete equation above. In this way we derive the following differential equation:
\begin{equation*}
\begin{split}
0 = \frac {\mu(x)} x (P(x\!)-u)\! -\! c\! +(\lambda \one_{\{u>0\}} \!-\mu(x)) u'+\! \textcolor{blue}{q \Big(\frac 1 2 (\lambda \one_{\{u>0\}} \!-\! \mu(x)) u''\! +\! \frac{\mu(x)} x u'\Big),}
\end{split}
\end{equation*}
where the second order term is the last one (blue term).

\paragraph{First order analysis.}
Before approximating numerically the solution to (\ref{eq:uSQM}), we propose to get some insights on the shape of the solution by doing a first order analysis. More precisely, the solution to the queuing system described above is characterized by the sign of the value function $u$. Consequently we are interested in finding potential sign switching points of $u$.\\
The core modeling ingredient is the value of the Poisson arrival rate $\lambda$ relative to $\mu(x)$. \\
For the first order analysis we look at the first order equation:
\begin{equation}\label{eq:edoSQM}
\begin{split}
0 =  \frac {\mu(x)} x \Big(P(x)-u(x)\Big) - c +\Big(\lambda \one_{\{u(x)>0\}} -\mu(x)\Big) u'(x).
\end{split}
\end{equation}
\begin{rem}
Let us remark that equation (\ref{eq:edoSQM}) corresponds to a trivial shared risk Mean Field Game monotone system with $N=1$, as described in the previous section. Note that in the framework of this model, the {\it mean field} aspect does not come from the continuum of agents (for every instant, the number of players is finite), but rather to the stochastic continuous structure of entries and exits of players.
\end{rem}
\noindent
Now we look at the case where the stylized limit orderbook presented here has an {\it infinite resiliency}, meaning once the orderbook is partially consumed by a marketable order, the remaining liquidity rearranges itself to fill the gap.
Moreover, we will consider the non degenerated case where sellers arrive at rate $\lambda$, larger than the exogenous consuming rate $\mu(x),$ for all $x$. 

\paragraph{An example with anticipatory behavior.}
Assume the arrival rate of buyers has the specificity to take two values: 
\begin{itemize}
\item a low value $\mu_1$ below a certain queue size threshold $S$,
\item a higher value $\mu_2 \; (\; \mu_2>\mu_1)$, above the threshold $S$.
\end{itemize}
As a function depending upon the queue size variable $x$, it reads:
\[\mu (x) = \mu _1 \one _{x < S} + \mu _2  \one _{x \geq S}, \; 0 \leq \mu_1 < \mu_2.\]
Here there are at least two points where $u$ changes sign:
\begin{equation}\label{eq:2pts}
x_1^* = \mu_1 P(x_1^*)/c \mbox{ and }  x_2^* = \mu_2 P(x_2^*)/c. 
\end{equation}
Figure \ref{fig:SQM} shows the plot of the solution (numerical approximation of the solution to equation (\ref{eq:uSQM})) for a certain set of parameters (for $P$ constant). We can see that the first switching point is close to the first order approximation $x^*_1$, while the last sign switch significantly deviates from the first order approximation $x^*_2$. It means that higher order terms have a non-negligible effect.\\
But most importantly, we observe that there is another sign switch strictly below the threshold $S$. The existence of such a switching point means that players anticipate the improved service before the threshold is reached. Indeed, their value function becomes positive meaning that players enter the queue strictly before the improved service starts. This is why we talk about an {\it anticipation switching point}. Consequently, we can conclude that at the equilibrium, the strategical players adopt an anticipatory behavior.

\begin{figure}[!h]
  \begin{center}
    \includegraphics[width=1\linewidth]{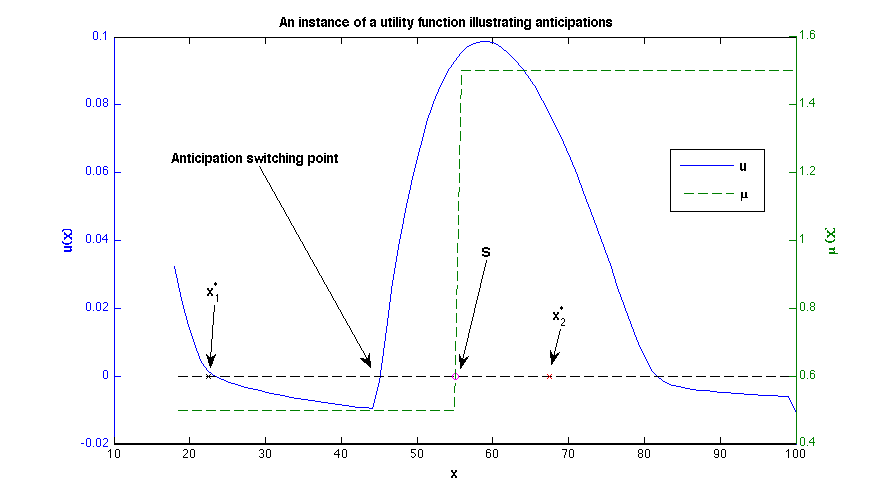}
  \end{center}
\caption{Here we notice that there is a point strictly before $S$ where $u$ switches from negative to positive. It means that players anticipate improved service $\mu_2$ and therefore are newly interested in entering the queue.}
\label{fig:SQM}
\end{figure}

\paragraph{First In First Out model.}
Finally we want to show that our approach allows to model distinct execution processes, and how the resulting equilibrium equations are impacted.\\
To do so, we consider the First In First Out (FIFO) protocol. This is the only change we make in  the model. To consider such a priority protocol, we have to introduce a new variable $z$ denoting the position of a trader in the queue of size $x$. Consequently the problem becomes bi-dimensional.\\
The equation becomes:
\begin{align}\label{eq:uFIFO1}
u(z,x) = &\;\; (1-\lambda \one_{u(x,x)>0}dt - \mu(x) dt) \cdot u(z,x) && \leftarrow \mbox{ \textcolor{blue}{nothing happens}}  \\
&+ \lambda \one_{\{u(x,x)>0\}} dt \cdot u(z,x+q) && \leftarrow \mbox{ \textcolor{blue}{new queue entrance}} \nonumber \\
&+ \mu(x) dt \cdot  u(z-q,x-q) && \leftarrow \mbox{ \textcolor{blue}{execution of the first order}} \nonumber\\
& - c q dt && \leftarrow \mbox{ \textcolor{blue}{waiting cost}}, \nonumber
\end{align}
in the domain $q<z<x$, and the boundary condition for $z=q$ is:
\begin{align}\label{eq:uFIFO2}
u(q,x) = &\;\; (1-\lambda \one_{u(x,x)>0}dt - \mu(x) dt) \cdot u(q,x) && \leftarrow \mbox{ \textcolor{blue}{nothing happens}}  \\
&+ \lambda \one_{\{u(x,x)>0\}} dt \cdot u(q,x+q) && \leftarrow \mbox{ \textcolor{blue}{new queue entrance}} \nonumber \\
&+ \mu(x) dt \cdot  P(x) && \leftarrow \mbox{ \textcolor{blue}{execution of the first order}} \nonumber\\
& - c q dt && \leftarrow \mbox{ \textcolor{blue}{waiting cost}}. \nonumber
\end{align}
System (\ref{eq:uFIFO1}-\ref{eq:uFIFO2}) can be easily solved numerically.

\subsection{The orderbook model}\label{sec:obmodel}

\paragraph{The matching mechanisms of order books.}
One of the roles of financial markets is to \emph{form prices} according to the balance between offer and demand. 
In modern markets, this mechanism takes place inside \emph{electronic order books where multilateral trading takes place}. They implement the following dynamic:
\begin{enumerate}
\item buyers and sellers can send electronic messages to a ``\emph{matching engine}''. These messages, called \emph{orders}, contain a side (``buy'' or ``sell''), a limit price and a quantity.
\item The matching engine contains a list of all pending orders it received in its memory. When it receives a new buy (respectively sell) order, it looks if pending sell (resp. buy) orders at a lower (resp. higher) price are available.
  \begin{itemize}
  \item If it is the case, it generates transactions between the owner of the incoming order and the owners of the compatible opposite orders, and removes the corresponding quantities in its list of pending orders;
  \item if the incoming order has a remaining quantity, it is inserted in the list of pending orders.
  \end{itemize}
\end{enumerate}
The list of pending orders waiting in the matching engine is called its ``\emph{limit order book}'' (LOB). 


 During the matching process, it is possible that the quantity of an incoming order does not match exactly the quantity made available at a compatible price (i.e. lower prices for a buy order and higher prices for a sell order) by opposite orders in the order book.
To handle such cases, matching engines need to implement a priority mechanism. The most used (see \cite{citeulike:10881873} for more details) are:
\begin{itemize}
\item \emph{time priority}: the ``oldest'' pending orders in the order book are matched first;
\item \emph{size priority}: the largest pending orders are matched first in case of competition between resting orders at the same price;
\item \emph{pro rata}: pending orders are matched for a fraction of their quantity proportionally to their relative size to the one of the whole queue (see \cite{Field2008rata}).
\end{itemize}
\smallskip

Each trading platform discloses its matching mechanism in detail to market participants in a \emph{rulebook} (like \cite{enx0406}).

\paragraph{Matching dynamics and trading styles.}
Market participants thus have to cope with rules of the matching engine they trade into while fulfilling their day-to-day goals.
Recent regulatory discussions raised questions on the potential negative interactions between the following classes of market participants in the same order book:
\begin{itemize}
\item \emph{Institutional investors}, that buy and sell large quantities of shares to manage their portfolios on the long term. They take the decision to buy or sell independently from the immediate state of the order book. They are in essence \emph{impatient} since they interact with other participants in the order book with the final goal to really buy or sell given quantities before a given deadline. They will not change their mind during the trading process given the state of the liquidity in the order book.
\item \emph{High Frequency Traders} are far more opportunistic. 
  Even if they do not have all the same behavior (see \cite{citeulike:11858957} for more details), they have in common the fact that: (1) they send very small orders to trading platforms, (2) they do it very often (i.e. at \emph{high frequency}), (3) they have no other reason to trade than the immediate state of the order book.
\end{itemize}
\smallskip

Concerns raised focused on the integrity of the price dynamics when so different participants are mixed in order books. The ``Flash Crash''  \cite{citeulike:8676220} has shown that liquidity glitches could cause large variations of prices formed in electronic order books with no fundamental reasons.
Academics studying the price formation process in order books usually name ``\emph{temporary market impact}'' the way prices temporally deviate from their stable value due to high consumption of liquidity (i.e. of pending orders) in an order book (see \cite{citeulike:4325901}, \cite{citeulike:5177397}).

Recent regulatory changes unexpectedly favored HFTs activity \cite{citeulike:12047995} (they are said to now be part of 70\% of transactions in the US, 40\% in Europe and 30\% in some Asian markets, like Japan).

\paragraph{Dedicating a model to study liquidity games in order books.}
The way market participants interact in order books is sophisticated, due to the fact that they continuously try to anticipate actions of other participants to take an adequate decision. Their classical dilemna is the following. On the one hand they want to trade as slow as possible to avoid to be detected nor consume liquidity too fast thus moving the price an unfavorable way (i.e. adverse selection costs). On the other hand they cannot afford to trade too slow to avoid to be exposed to adverse market moves (i.e. opportunity costs).

A large literature proposes mathematical frameworks for market participant to optimize their trading kinematics: first mean-variances approaches \cite{OPTEXECAC00}, then stochastic control ones \cite{citeulike:5797837}, and more recently stochastic algorithms have been designed to capture optimally liquidity at the smallest time scale \cite{citeulike:5177512}.
In all these approaches, each market participant tries to optimize her behavior assuming that the aggregation of other players is ``martingale'' in the sense that it is submitted to price moves and to some order books characteristics (like the volatility, the market depth, the intensity of orders reaching the matching engine, etc.) emerging from the activity of other participants without influencing it (in most cases a \emph{market impact function} is introduced, exogenously from the activity of other participants).

The MFG approach presented here takes into account the way strategies of market participants change the dynamics of the order book. It opens the door to more endogenous models.
The previous section is a simple illustration of this approach: the mean field is the state of the one-sided order book, and since each player implements an optimal strategy (in the sense that she values the time to wait in the queue and compares it to an immediate price to pay), it is possible to understand the dynamics of the value function $u$ shared by all market participants.

In this section we will go one step further: the consuming rate $\mu(x)$ of the one-sided order book (say it is the queue of sellers) of section \ref{sec:onesided} will be linked to the size of the queue $x$, but in an endogenous way: that is via optimal strategies followed by participants in the other queue (the one of buyers).
The flow consuming the selling queue is the one of buyers deciding on their side to pay immediately instead of waiting in the queue (of buyers). It will enable the emergence of coupled dynamics taking into account the states of the two queues.

To render a market impact effect, we will model the way impatient buy or sell orders consume the queue of sellers or buyers.
For the ease of presentation, in this paper we will consider that at our time scale the ``\emph{fair price}'' (that can be understood as a \emph{latent price} like in \cite{citeulike:8317402} or \cite{citeulike:9217792}, or as a \emph{fundamental price} like in \cite{ho1983dynamics}) does not change significantly. But the reader can note that extending this model pegging a diffusive behavior on this fair price will do no more than adding an Ito term to the considered dynamics.
Mean Field-inspired models at a largest time scale, targeting the understanding of the latent price dynamics have been already proposed, but not at the level of the order books (for instance in  \cite{citeulike:7621540}, the dynamics of a latent order book is submitted to an MFG like mechanism, but the realizations of the order book is modeled via a forward only scheme).

Hence, we introduce here a market-impact like relation between the size of the order book queues and the transaction prices around the \emph{fair price} $P$:
consuming a quantity $q$ of the queue of pending selling orders of size $Q^a$ will temporally move the price from $P$ to $P+\delta\cdot q/(Q^a-q)$.
Qualitatively, it implies an almost linear market impact with elasticity $\delta$ (i.e. $\delta$ can be compared to Kyle's lambda \cite{citeulike:3320208}).
Moreover, the modelled orderbook will have an \emph{infinite resiliency}: once liquidity is consumed in a queue, the remaining quantity will reshape itself to fill the created gap.

The details of the MFG model are exposed in the following sections. In short, it contains these following ingredients:
\begin{itemize}
\item market participants are able to act strategically, anticipating others' moves;
\item the dynamics of the two queues (patient buyers and patient sellers) are coupled thanks to the fact that the flow consuming each of them is provided by agents of the other side choosing to be impatient (either because they do not use a smart routing strategy, or because the outcome of their smart strategy is to send a market order);
\item market impact is introduced dynamically (related to the size of the queues), modifying the premium to be paid by impatient traders, thus influencing their choices.
\end{itemize}
%
Moreover, our order book model needs a priority rule, for simplicity reasons we will use a pro-rata rule (since it keeps the dimensionality of the model tractable).
As it will be seen later, it allows to render enough complexity to obtain meaningful results.

\paragraph{Rendering different trading styles in an order book model.}
To understand the features of our MFG model, we will first study its dynamics in a market with homogenous participants. Since we are in a MFG framework, it will render a continuum of agents, at this stage they share the same macroscopic parameters:
\begin{itemize}
\item the same messaging intensity $\lambda$,
\item the same size of orders they send $q$,
\item the same waiting cost $c$.
\end{itemize}
Beside, we enrich the model with one more feature: the use of SOR (Smart Order Router).
A Smart Order Router (see \cite{FOU06} for an efficiency study or \cite{citeulike:12047995} for a generic presentation) is a device containing a software dedicated to ``smartly route'' orders. In our model, only SOR users will be able to act strategically instead of being blindly impatient.

It can be considered that agents not using a SOR have an infinite waiting cost. Since institutional investors take decisions independently of the current state of the orderbook, it is realistic to consider that a fraction of them will not take time to implement sophisticated microscopic strategies on some of their orders.

The proportion of market participants using a SOR (i.e. not infinitely impatient market participants) will be parametrized thanks to a specific flow of intensity $\lambda^-$.
\\

\begin{table}[h!]
  \centering
  \begin{tabular}{l|c|c|}
                     & Instit. Investors & HFT \\\hline
    Order size & large & small \\\hline
    Speed & normal & fast \\\hline
    SOR & often used & always used \\\hline
  \end{tabular}
  \caption{Qualitative modeling of Institutional Investors and HFT.}
  \label{tab:paramIIHFT}
\end{table}
In a second stage we will mix heterogenous agents, with different behaviours summarized in Table \ref{tab:paramIIHFT}:
\begin{enumerate}
\item \emph{Institutional investors}, trading large quantities not using systematically a SOR;
\item \emph{HFT (High Frequency Traders)}, faster than the former participants, using smaller orders, more patient (in the sense that they bare a lower cost per share waiting in a queue), and all of them using a SOR.
\end{enumerate}


\paragraph{Transaction price.}
The \emph{market price} will be centered on a constant 
$P$. The \emph{market depth} is $\delta$, meaning that no transaction will take place at a price lower than $P-\delta$ or higher than $P+\delta$. The (time varying) size of the bid queue (waiting buy orders) is $Q^b_t$ and the size of the ask one (waiting sell orders) is $Q^a_t$.

When a market (buying) order hits the ask queue, the transaction price is $\PB$ and when the bid queue is lifted by a market (selling) order, the transaction price is $\PS$. The price takes into account instantaneous queue size adjustments depending upon the order size $q$.
\begin{equation}
  \label{eq:execprice}
  \PB_q(Q^a_t) := P + \frac{\delta q}{Q^a_t-q},\;\;\; \PS_q (Q^b_t):= P - \frac{\delta q}{Q^b_t-q}
\end{equation}
\noindent
Qualitatively, it means that the market impact is linear. Boundary conditions, to be introduced later, impose $Q^a_t, Q^b_t > q$, so that there is no definition problem of the transaction prices


\paragraph{Value functions.}
The value function for a trader submitting a buy order in the bid queue is $v(Q^a_t,Q^b_t)$ and the one of a sell order in the ask queue is $u(Q^a_t,Q^b_t)$. In the model agents have risk-neutral preferences, thus the utility functions coincide with price expectations.

\paragraph{Orders arrival rates.} We distinguish between SOR and non-SOR orders. The proportion of these two types of orders is exogenous, and set as an input of the model.

\bigskip
\noindent
Buy and sell SOR orders arrive according to two Poisson processes with intensity $\LB$ and $\LS$.
Several cases can be considered:
\begin{enumerate}
\item Homogeneous Poisson processes: 
\begin{equation}\label{eq:Prates}\LB  = \LS =  \lambda.\end{equation}
\item Heterogeneous (in space) Poisson processes  
\[  \LB = \lambda f(Q^b_t),\; \LS = \lambda f(Q^a_t),\]
where $f(x)$ is a decreasing function. Typical instances are $ f(x) = 1/x$, $f(x) = \one _{x\leq \bar Q}$ likewise.\\
\end{enumerate}
However, we will focus in this paper on the homogenous case.\\
Let us remark that the previous rates could be endogenized and set as the result of an optimization problem involving the utility functions, consequently depending upon the queue sizes $Q^{\bullet}_t$.
\bigskip

\begin{figure}[h!]
  \centering
  \includegraphics[width=.5\linewidth]{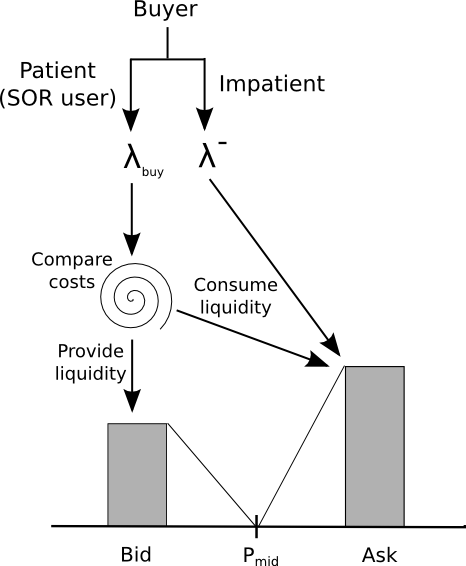}
  \caption{Idealized diagram of the decision tree of agents in the model.}
  \label{fig:digdecision}
\end{figure}

\noindent
Non-SOR orders (i.e. belonging to very impatient investors or traders) are always liquidity remover, with arriving rate $2\lambda^-$ (equally distributed between buyers and sellers).

\paragraph{Market participants decision processes.}
When a buy (resp. sell) order arrives, its owner has to make a \emph{routing decision} (see Figure \ref{fig:digdecision} for an idealized diagram of this process):
\begin{itemize}
\item if $v(Q^a_t,Q^b_t+q)<\PB(Q^a_t)$ (resp. $u(Q^a_t+q,Q^b_t)>\PS(Q^b_t)$) it is more valuable to route the order to the bid (resp. ask) queue (i.e. sending a limit order). In such a case the order will be a \emph{Liquidity Provider (LP)}.  We define symmetrically \emph{Liquidity Consumer (LC)} orders. This decision is formalized in the model by setting the variable $\RB^\oplus(v,Q^a_t,Q^b_t+q)$ to 1 when $v(Q^a_t,Q^b_t+q)<\PB(Q^a_t)$, and to zero otherwise:
\end{itemize}
  \begin{equation}
    \label{eq:routing}
    \begin{split}
    &\RB^\oplus(v,Q^a_t,Q^b_t+q):=\one_{v(Q^a_t,Q^b_t+q)<\PB(Q^a_t)}, \mbox{ \rmk{LP buy order}}\\
    &\RS^\oplus(u,Q^a_t+q,Q^b_t)\;:=\one_{u(Q^a_t+q,Q^b_t)>\PS(Q^b_t)}, \mbox{ \rmk{LP sell order}}.
   \end{split}
  \end{equation}
  \begin{itemize}
\item otherwise the order goes Liquidity Consumerly to the ask (resp. bid) queue to obtain a trade. It will be a liquidity remover in this case:
 \begin{eqnarray*}
  &&\RB^\ominus(Q^a_t\!,Q^b_t\!):=1-\RB^\oplus(Q^a_t\!,Q^b_t\!), \mbox{ \rmk{LC buy order}}\\
  &&\RS^\ominus(Q^a_t\!,Q^b_t\!)\;:=1-\RS^\oplus(Q^a_t\!,Q^b_t\!), \mbox{ \rmk{LC sell order}}.
  \end{eqnarray*}
  The price of such a transaction is $\PB$ (resp. $\PS$) as defined by equality (\ref{eq:execprice}). Note that we omit the dependence on $u, v$ when  it is unnecessary for the understanding of the equations.
\end{itemize}
We impose the following boundary conditions:
\begin{equation}\label{eq:bc}
\begin{split}
 \mbox{Min liquidity condition : }\RB^\oplus(r,Q^b_t) = 1, \; \RS^\oplus(Q^a_t,r) = 1, \; \forall r \leq q, \\
\mbox{Technical condition : }\RB^\oplus(Q^a_t,r) = 1, \; \RS^\oplus(r,Q^b_t) = 1, \; \forall r < q.
\end{split}
\end{equation}
In particular, conditions (\ref{eq:bc}) ensure that $(Q^a_0,Q^b_0)\geq (q,q) \Rightarrow (Q^a_t,Q^b_t) \geq (q,q), \; \forall t>0$.

\paragraph{A MFG formalization: 1. The Control.}
Like in Section \ref{sec:onesided}, we will adopt a more standard MFG formalism.
First of all note the identity of an agent $i$ has a one-to-one correspondance with the sum of the two Poisson processes $N^\LS$ and $N^\LB$ containing all the arrivals arrivals. When at time $t$ a $i$th agent enters the game, it can be a selling agent (in this case $dN^{\LS}_t=1$) or a buying agent  (in this case $dN^{\LB}_t=1$); in any case $i:=N^{\LS}_t+N^{\LB}_t$.

A selling agent $i$ can control its cost setting $\RS^\oplus=1$ (in such a case he will stay in the queue) or $\RS^\oplus=0$ (in such a case he will consume the other queue). For a buying agent, the control is $\RB^\oplus$.

The decision is taken to minimize his cost function:
\begin{itemize}
\item on the one hand the selling agent knows the immediate price if he consumes liquidity on the bid queue (it is $\PS_q (Q^b_t)$, defined by equality \eqref{eq:execprice});
\item on the other hand, by construction the expected value to wait in the queue is $u_i(Q^a_t + q,Q^b_t)$.
\end{itemize}

Like in the one queue toy model (equation \eqref{eq:ctrl:1D}), his optimal control is hence chosen to maximize the selling price:
$${\RS^\oplus}_{,i}:=\arg\max_\delta \delta \cdot u_i(Q^a_t + q,Q^b_t) + (1-\delta)\cdot \PS_q (Q^b_t).$$

\paragraph{A MFG formalization: 2. Definition of the cost function.}
%
The dynamics associated with this matching mechanism can be written:
\begin{itemize}
\item for the size of the ask queue $Q^a_t$ (it is equivalent to equation \eqref{eq:dx:1D} of the one queue toy model):
  \begin{equation}
    \label{eq:dQ}
      dQ^a_t= \left( dN^{\LS} {\RS}^{\oplus}_{,i} - (dN^{\LB } {\RB}^\ominus_{,i'} + dN^{\lambda^-})\right) q,
  \end{equation}
  where $i$ is the identity of the selling agent taking a decision at $t$ (i.e. $i:=N^{\LS}_t$)
  and $i'$ is the identity of the buying agent taking a decision at $t$ (i.e. $i':=N^{\LB}_t$).
\item and for the running cost function at the ask (similarly to equation \eqref{eq:dJ:1D} of the one queue toy model):
  \begin{eqnarray}
    \label{eq:dJi}
      dJ^u_i(Q^a,Q^b)&=&\left[ \frac{q}{Q^a} \PB(Q^a) + \left( 1-\frac{q}{Q^a}  \right) J^u_i(Q^a-q, Q^b)\right]\\
      &&\;\; \cdot ( dN^{\LB } {\RB}^\ominus_{,i'} + dN^{\lambda^-} )- c_a q \, dt. \nonumber
  \end{eqnarray}
  The index $i'$ underlines the agents interacting with the cost function associated to wait on the ask side are the buying ones, and the index $i$ underlines the agents taking decision using this cost function to choose their control are sellers.
\end{itemize}
Again, with $T$ large enough, $u_i(\calQ^a, \calQ^b) = \Esp \int_{t=0}^T J^u_i(Q^a_t, Q^b_t)\, dt$ given $Q^a_0=\calQ^q, Q^b_0=\calQ^b$.

\paragraph{A MFG formalization: 3. Expression of the mean field.}
In this case the \emph{mean field} is two dimensional. It it made of the sizes of the two queues $(Q^a_t, Q^b_t)$. One can note the identity of the agents $i$ and $i'$ has no importance in equations (\ref{eq:dQ}) and (\ref{eq:dJ}), all the dynamics are summarized by $(Q^a_t, Q^b_t)$.

Thanks to this remark we can write the \emph{forward} dynamics of the mean field
\begin{equation}
  \label{eq:MFQ}\left\{
  \begin{array}{lcl}
    dQ^a_t/q&=& dN^{\LS} {\RS}^{\oplus} - (dN^{\LB } {\RB}^\ominus + dN^{\lambda^-}) \\
    dQ^b_t/q&=& dN^{\LB} {\RB}^{\oplus} - (dN^{\LS } {\RS}^\ominus + dN^{\lambda^-}) 
  \end{array}\right.
\end{equation}
in which we can plug the solution of the optimal control choices
\begin{equation}
  \label{eq:optC}\left\{
      \begin{array}{lcl}
        {\RS^\oplus}&=&\arg\max_\delta \delta \cdot u(Q^a_t + q,Q^b_t) + (1-\delta)\cdot \PS_q (Q^b_t)\\
        {\RB^\oplus}&=&\arg\max_\delta \delta \cdot v(Q^a_t + q,Q^b_t) + (1-\delta)\cdot \PB_q (Q^a_t)
      \end{array}\right. . 
\end{equation}

Again, thanks to the mean field the indices $i$ and $i'$ are no more needed. To be able to make the optimal choice, the agents have to solve the dynamics of the value function
\begin{equation}
  \label{eq:Jmean}\left\{
  \begin{array}{lcl}
    u(\calQ^a, \calQ^b) &=& \Esp \int_{t=0}^T J^u(Q^a_t, Q^b_t)\, dt\\
    v(\calQ^a, \calQ^b) &=& \Esp \int_{t=0}^T J^v(Q^a_t, Q^b_t)\, dt
  \end{array}\right. ,
\end{equation}
where $J^u$ and $J^v$ are now defined without any reference to the identity of the agent $i$ or $i'$;
definition (\ref{eq:dJi}) now becomes:
\begin{eqnarray}
    \label{eq:dJ}
      dJ^{u}(Q^a,Q^b)&=&\left[ \frac{q}{Q^a} \PB(Q^a) + \left( 1-\frac{q}{Q^a}  \right) J^u(Q^a-q, Q^b)\right]\\
      &&\;\; \cdot ( dN^{\LB } {\RB}^\ominus + dN^{\lambda^-} )- c_a q \, dt. \nonumber
  \end{eqnarray}
and $dJ^v$ is naturally defined a similar way.

The last step of the mean field game formalisation for our MFG orderbook is developed in the next section.

\paragraph{Remark about the matching process.} 
Before this last step, note that
the matching process is close to a pro-rata one \cite{Field2008rata}: in case of a liquidity consuming buy order of size $Q$ to be matched, all market participants having a quantity $q$ resting in the ask queue will obtain a transaction for a fraction $Q\cdot q/Q^a_t$ of its order at price $\PB(Q^a_t)$, the remaining quantity staying in the orderbook.\\
At a first glance one may think that this matching process will induce intricate terms in the equations, but in fact it will not since we only consider utilities by units of good transactions.

  \begin{itemize}
  \item The orderbook shape is assumed to be linear (in the price),
    meaning that if a newcomer decide to provide liquidity to the
    market, her order will be split proportionally to the liquidity
    already present in the book: the orderbook will remain linear in
    price with a higher slope.
  \item Hence when a Liquidity Consumer order occurs, it will
    partially fill all Liquidity Provider orders according to a proportional rule.
  \end{itemize}

\section{The PFP (Price Formation Process) dynamics}\label{sec:pfp}

\subsection{Stationary equilibrium as a fixed point of the value function: introducing the equations}

The fourth step of the MFG formalisation of our mean field game orderbook allows us to characterize an equilibrium via recursive equations of the expected value of future payoffs (value functions).
\def\sone{\rule{1em}{0pt}}

\begin{align}\label{eq:ua}
  \hspace*{.5cm} &    u(Q^a_t,Q^b_t) = \\ 
  &(1 - \LB dt-\LS dt - 2\lambda^- dt) \; u(Q^a_t,Q^b_t) && \leftarrow \mbox{\textcolor{blue}{ nothing}} \nonumber \\ 
  &+(\LS \RS^\ominus(u,Q^a_t+q,Q^b_t) + \lambda^-)dt \;  u(Q^a_t,Q^b_t-q)  &&\leftarrow \mbox{\textcolor{blue}{sell order, LC}} \nonumber \\ 
  &+\;\LS \RS^\oplus(u,Q^a_t+q,Q^b_t) dt \; u(Q^a_t+q,Q^b_t)  &&\leftarrow \mbox{\textcolor{blue}{sell order, LP}}\nonumber \\ 
  &+(\LB \RB^\ominus(v,Q^a_t,Q^b_t+q) + \lambda^-)dt \cdot \big[  &&\leftarrow \mbox{\textcolor{blue}{ buy order, LC}}\nonumber \\ 
  &\hspace*{0.07\textwidth} \underbrace{\frac{q}{Q^a_t} \PB(Q^a_t)}_{\mbox{\textcolor{blue}{ trade part (ask)}} }    + \underbrace{(1- \frac{q}{Q^a_t})\,  u(Q^a_t-q,Q^b_t)}_{\mbox{\hspace{0.03cm}\textcolor{blue}{ removing (ask)}}} \big] \nonumber \\
  & + \; \LB\, \RB^\oplus(v,Q^a_t,Q^b_t+q) dt \; u(Q^a_t,Q^b_t+q)  &&\leftarrow \mbox{\textcolor{blue}{buy order, LP}}\nonumber \\ 
  & - \; c_a q\,dt. &&\leftarrow \mbox{\textcolor{blue}{cost to maintain inventory}} \nonumber
  \end{align} 
\noindent
Symmetrically, we have :
\def\sone{\rule{1em}{0pt}}
\begin{align}\label{eq:ub}
  \hspace*{.5cm} &    v(Q^a_t,Q^b_t) = \\ 
  &(1 - \LB dt-\LS dt - 2\lambda^- dt) \; v(Q^a_t,Q^b_t) && \leftarrow \mbox{\textcolor{blue}{ nothing}} \nonumber \\ 
  &+(\LB \RB^\ominus(v,Q^a_t,Q^b_t+q) + \lambda^-)dt \; v(Q^a_t-q,Q^b_t)  &&\leftarrow \mbox{\textcolor{blue}{buy order, LC}} \nonumber \\ 
  &+\;\LB \RB^\oplus(v,Q^a_t,Q^b_t+q) dt \; v(Q^a_t,Q^b_t+q)  &&\leftarrow \mbox{\textcolor{blue}{buy order, LP}}\nonumber \\ 
  &+(\LS \RS^\ominus(u,Q^a_t+q,Q^b_t) + \lambda^-)dt \cdot [ &&\leftarrow \mbox{\textcolor{blue}{ sell order, LC}}\nonumber \\ 
  &\hspace*{0.07\textwidth} \underbrace{ \frac{q}{Q^b_t} \PS(Q^b_t)}_{\mbox{\textcolor{blue}{ trade part (bid)}} }    + \underbrace{(1- \frac{q}{Q^b_t})\, v(Q^a_t,Q^b_t-q)}_{\mbox{\hspace{0.03cm}\textcolor{blue}{ removing (bid)}}} \big] \nonumber \\
  & + \; \LS\, \RS^\oplus(u,Q^a_t+q,Q^b_t) dt \; v(Q^a_t+q,Q^b_t)  &&\leftarrow \mbox{\textcolor{blue}{sell order, LP}}\nonumber \\ 
  & - \; c_b q\,dt. &&\leftarrow \mbox{\textcolor{blue}{cost to maintain inventory}} \nonumber
  \end{align}

\noindent
Remind that $\RB$ and $\RS$ are functionals of $Q^a$ and $Q^b$ and also implicitly depends on $u$ and $v$. Of course the previous principles hold for $Q^a_t,Q^b_t> q$, which is always the case thanks to conditions (\ref{eq:bc}).
In the equations above, $c_a$ and $c_b$ are positive constants modeling the cost to maintain inventory per unit, that is the cost of never being processed once waiting in the queue.


\subsection{Symmetric case}
In the case where $\LS = \LB = \lambda$, and $c_a=c_b=c$, we have the following results. \\
For the sake of simplicity we will often use new notations for the queue size variables: $x$ and $y$ stand for $Q_a$ and $Q_b$.
\begin{lem}
\[\forall(x,y),  \; \RS^\oplus(u,x,y)= \RB^\oplus(2P-v,y,x) \]
\end{lem}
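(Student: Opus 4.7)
The plan is two short steps of algebra followed by an appeal to the buyer--seller duality induced by the symmetric regime.

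By the definition in (\ref{eq:routing}), $\RS^\oplus(u, x, y) = \mathbf{1}_{u(x,y) > \PS(y)}$. Applying the same rule with the function $\varphi := 2P - v$ at the pair $(y, x)$ yields
\[
\RB^\oplus(2P-v, y, x) = \mathbf{1}_{2P - v(y, x) < \PB(y)}.
\]
The definitions in (\ref{eq:execprice}) immediately give the identity $\PB(y) + \PS(y) = 2P$, so $2P - \PB(y) = \PS(y)$, and the right-hand indicator can be rewritten as $\mathbf{1}_{v(y,x) > \PS(y)}$. The lemma is therefore equivalent to the pointwise equivalence $u(x,y) > \PS(y) \Leftrightarrow v(y,x) > \PS(y)$.

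I would establish this through the buyer--seller duality available in the symmetric regime $\LS = \LB = \lambda$ and $c_a = c_b = c$: the system (\ref{eq:ua})--(\ref{eq:ub}) is invariant under the joint mirror transformation $(Q^a, Q^b) \leftrightarrow (Q^b, Q^a)$ combined with the price reflection $p \leftrightarrow 2P - p$, and this invariance yields the relation $u(x,y) = 2P - v(y,x)$. Once this duality is granted, $u(x,y) > \PS(y)$ rewrites as $v(y,x) < \PB(y)$, so the desired equivalence reduces to the a priori confinement $v(y,x) \in (\PS(y), \PB(y))$; both bounds follow from the rational routing in (\ref{eq:routing}), since a buyer would never remain LP at an expected cost above the immediate market cost $\PB$, and (via the duality transported to the seller side) a seller would never remain LP at an expected revenue below $\PS$.

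The delicate point is therefore the duality $u(x,y) + v(y,x) = 2P$ itself. Substituting the ansatz $u(x,y) = 2P - v(y,x)$ into (\ref{eq:ua}) and using $\PS = 2P - \PB$ inside the trade term, one has to verify that the outcome coincides term-by-term with (\ref{eq:ub}) evaluated at $(y, x)$. At the coefficient level this reduces to identities of the form $\RS^\oplus(u, A+q, B) = \RB^\oplus(v, B, A+q)$ and its $\ominus$ analogue, which are of the very same algebraic nature as the present lemma. In practice this suggests proving the lemma and the duality simultaneously inside a single symmetric fixed-point argument for the coupled system.
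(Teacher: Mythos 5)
Your opening algebra is fine, and you have correctly isolated the one fact that makes the lemma work, namely $\PB(y)+\PS(y)=2P$. The problem is in what you feed into $\RB^\oplus$. You read $\RB^\oplus(2P-v,y,x)$ as $\one_{2P-v(y,x)<\PB(y)}$ with $v$ the buyer's value function solving (\ref{eq:ub}), which turns the lemma into the substantive claim $u(x,y)>\PS(y)\Leftrightarrow v(y,x)>\PS(y)$ and forces you to invoke (i) the antisymmetry $u(x,y)+v(y,x)=2P$ and (ii) the confinement $v(y,x)\in(\PS(y),\PB(y))$. Both steps are fatal. Step (i) is circular: the antisymmetry is precisely the conclusion of the Proposition that this lemma is designed to prove --- its proof performs the change of variable $w(y,x)=2P-u(x,y)$, applies the lemma to turn (\ref{eq:ua}) into (\ref{eq:ub}) written for $w$, and only then concludes $w=v$ by uniqueness --- so the lemma must hold independently of the system, for arbitrary functions. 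Step (ii) is simply false: by the definition (\ref{eq:routing}), any region where $\RB^\ominus=1$ is exactly a region where $v\ge\PB$, and the existence and stability of these liquidity-consuming regions ($R^{--}$, $R^{-+}$, \dots) is the central object of the paper. Indeed, under your reading combined with the antisymmetry, the two sides of the lemma disagree wherever $v\ge\PB$; that contradiction is the signal that the reading is not the intended one.

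The intended content is a one-line formal identity. The expression $2P-v$ in the lemma stands for the reflected-and-transposed function $w$ defined by $w(y,x):=2P-u(x,y)$, i.e.\ exactly the object introduced in the proof of the Proposition. With that reading,
\begin{equation*}
\RB^\oplus(w,y,x)=\one_{w(y,x)<\PB(y)}=\one_{2P-u(x,y)<\PB(y)}=\one_{u(x,y)>2P-\PB(y)}=\one_{u(x,y)>\PS(y)}=\RS^\oplus(u,x,y),
\end{equation*}
using only $2P-\PB(y)=\PS(y)$ from (\ref{eq:execprice}). No property of the solution $(u,v)$, no confinement of the value functions, and no fixed-point argument is needed; the ``simultaneous fixed point'' you propose at the end is solving a problem that the lemma does not pose.
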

\noindent
This simple symmetry result is useful to get a necessary condition for the solution.
\begin{prop}
If system (\ref{eq:ua})-(\ref{eq:ub}) has a unique solution $(u,v)$, then
\[\forall(x,y), \;\;u(x,y) +P = P-v(y,x).\]
That is, $u$ and $v$ are antisymmetric up to the constant $P$.
\end{prop}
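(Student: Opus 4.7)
The plan is to exploit the uniqueness hypothesis via a symmetry substitution. I will exhibit a second pair $(\tilde u,\tilde v)$ that also solves the coupled system (\ref{eq:ua})--(\ref{eq:ub}), and whose equality with $(u,v)$ is precisely the claimed antisymmetry; uniqueness then finishes the proof. Concretely I set
\[
\tilde u(x,y) := 2P - v(y,x),\qquad \tilde v(x,y) := 2P - u(y,x),
\]
so that the identity to be proved (read as $u-P$ and $v-P$ being antisymmetric under the argument swap, i.e.\ $u(x,y)+v(y,x)=2P$) is exactly $(\tilde u,\tilde v)=(u,v)$.

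The first step is to check how the routing indicators transform under this substitution. Using the definitions of $\RB^\oplus,\RS^\oplus$ in (\ref{eq:routing}) together with the elementary identity $\PB(z)+\PS(z)=2P$ from (\ref{eq:execprice}), a direct substitution yields
\[
\RS^\oplus(\tilde u,a,b) \;=\; \RB^\oplus(v,b,a),\qquad \RB^\oplus(\tilde v,a,b) \;=\; \RS^\oplus(u,b,a),
\]
and symmetrically for the $\ominus$ counterparts. This is precisely the content of the preceding lemma, which is why it is placed just before the proposition.

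The second step is to plug $(\tilde u,\tilde v)$ into equation (\ref{eq:ua}) and show it reduces to equation (\ref{eq:ub}) evaluated at the swapped state $(y,x)$ after being shifted by $2P$. Each $\tilde u$-value splits as $2P-v(\cdot,\cdot)$; the five $2P$ contributions, collected via $R^\oplus+R^\ominus=1$ and the equality of incoming and outgoing rates, sum to a coefficient $1-(\lambda\RS^\ominus+\lambda^-)(q/x)\,dt$. This residual $2P$ then combines with the explicit $\PB(x)$ trade term, using $\PB(x)-2P=-\PS(x)$, to produce the $-(\lambda\RS^\ominus+\lambda^-)(q/x)\PS(x)\,dt$ term of (\ref{eq:ub}) at $(y,x)$. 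The symmetry assumptions $\LB=\LS=\lambda$ and $c_a=c_b=c$ then align the remaining event-rate and waiting-cost contributions, the $-v$ pieces match term by term via the indicator identities of step one, and the boundary conditions (\ref{eq:bc}) are manifestly invariant under $(x,y)\leftrightarrow(y,x)$. A perfectly analogous computation shows $\tilde v$ satisfies (\ref{eq:ub}).

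The main obstacle is the bookkeeping in the second step: with five event branches in each equation one must track carefully which indicator is evaluated at which state and correctly fold together $\PB+\PS=2P$, $R^\oplus+R^\ominus=1$, and the $\lambda,c$ symmetries so that the $2P$ coefficients collapse exactly into the $\PS$ trade term of the mirror equation. Once this algebra is in place, uniqueness of the solution gives $(\tilde u,\tilde v)=(u,v)$, which is the stated antisymmetry up to the constant $P$.
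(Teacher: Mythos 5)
Your proposal is correct and follows essentially the same route as the paper's own (much terser) proof: substitute $2P$ minus the swapped-argument function into (\ref{eq:ua}), invoke the preceding lemma on the routing indicators together with $\PB(z)+\PS(z)=2P$, recover (\ref{eq:ub}), and conclude by uniqueness. You also correctly read the stated identity in the sense $u(x,y)+v(y,x)=2P$ (i.e.\ antisymmetry of $u-P$ and $v-P$), which is what the paper's change of variable $w(y,x)=2P-u(x,y)$ and the later Proposition \ref{prop:antis} confirm is intended.
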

\begin{proof}
Take Equation (\ref{eq:ua}) then perform the change of variable $w(y,x) = 2P-u(x,y)$, then apply the previous Lemma, switch the roles of $x$ and $y$ and multiply by $-1$. Then you get equation (\ref{eq:ub}), hence the conclusion.
\end{proof}

\subsection{Continuous approximation}\label{approx}

In this paragraph we formally derive differential equations corresponding to the PFP dynamic discrete equations (\ref{eq:ua}-\ref{eq:ub}) as presented in the previous section. Hopefully, this will lead us to get easily some qualitative insights on the solutions $u$ and $v$.\\
To do so, we write the Taylor expansion of order 1 at the point $(x,y)$ in system (\ref{eq:ua}-\ref{eq:ub}). After a quick computation, we get the following system of Partial Differential Equations (PDEs). Note that for the sake of simplicity we shorten the notations as follows: $sell $ becomes $s$, $buy$ becomes $b$, $Q^a$ becomes $x$ and $Q^b$ becomes $y$.
\begin{equation*}
\begin{split}
\mbox{(Ask) } \; \; \; \; 0 &= [(\lambda_bR_b^\ominus +\lambda^-)\frac 1 x (p^b(x)-u) - c_a] \\
& + [\lambda_s R^\oplus_s  - \lambda_bR^\ominus_b -\lambda^-] \cdot \partial_xu + [\lambda_b R^\oplus_b  - \lambda_s R^\ominus_s -\lambda^- ] \cdot \partial_yu ,
\end{split}
\end{equation*}
\begin{equation*}
\begin{split}
\mbox{(Bid) } \; \; \; \; 0 &= [(\lambda_s R_s^\ominus +\lambda^-)\frac 1 y( p^s(y)-v) + c_b] \\
& +  [\lambda_s R^\oplus_s  - \lambda_bR^\ominus_b -\lambda^-] \cdot \partial_xv + [\lambda_b R^\oplus_b  - \lambda_s R^\ominus_s -\lambda^- ] \cdot \partial_yv. 
\end{split}
\end{equation*}
Recall that $u, v, R_b, R_s$ are estimated at $(x,y)$ and $R_b$ depends upon $v$, resp. $R_s$ depends upon $u$. Consequently, $R_b$ and $R_s$ are the coupling terms in the PDE system (Ask)-(Bid).\\
The system has to be understood locally in the four regions 
\begin{equation*}
\begin{split}
&R^{++}\!\!=\{(x,y),  R_s^\oplus(x,y)=R_b^\oplus(x,y)=1\}, R^{--}\!\!=\{(x,y), R_s^\ominus(x,y)=R_b^\ominus(x,y)=1\},\\
&R^{+-}\!\!=\{(x,y),  R_s^\oplus(x,y)=R_b^\ominus(x,y)=1\}, R^{-+}\!\!=\{(x,y), R_s^\ominus(x,y)=R_b^\oplus(x,y)=1\}.\\
\end{split}
\end{equation*}
Now we can write the general form of the first order system of coupled PDEs. 
\begin{eqnarray}\label{eq:gensyst}
&&0 = \gamma_a(u,v,x,y)+\alpha (u,v,x,y) \partial_x u +\beta (u,v,x,y) \partial_yu\\
\label{eq:gensystt}
&&0 = \gamma_b(u,v,x,y)+ \alpha(u,v,x,y) \partial_x v + \beta(u,v,x,y) \partial_yv,
\end{eqnarray}
where $\gamma_a,\gamma_b$, $\alpha, \beta$ have some good symmetry properties to be described later on.

\paragraph{The MFG framework.}
The model is of course a Mean Field Game. As mentioned in section \ref{sec:MFG}, there are continuous entries and exits of players (modeled with Poisson processes). Therefore the basis assumptions are fulfilled: continuum of atomized and anonymous players.\\
Comparing equations (\ref{eq:gensyst})-(\ref{eq:gensystt}) and (\ref{eq:MFGsyst}), it is easy to notice that the equilibrium equations have the same form as the monotone system characterizing some MFG equilibria.

\paragraph{Second order terms.}We kept only the first order terms in the equations. The second order terms to be added to the equations are:\\
\begin{equation*}
\begin{split}
\!\! \mbox{\tiny{In (Ask)}}\;\frac{q^2}{2}\!\left[\frac 2 x (\lambda_bR_b^\ominus \!+\! \lambda\!^-\!)\partial_x u \!
+ \!\lambda\!^-\!  \Delta u \!+\! ( \lambda_sR^\oplus_s\!+\!\lambda_bR_b^\ominus )  \partial_{xx}u\!+ \!(\lambda_sR_s^\ominus\!+\!\lambda_bR^\oplus_b )  \partial_{yy}u\!\right]\!\!,
\end{split}
\end{equation*}
\begin{equation*}
\begin{split}
\!\! \mbox{\tiny{In (Bid)}}\;\frac{q^2}{2}\!\left[\frac 2 y ( \lambda_sR_s^\ominus\!+\! \lambda\!^-)\partial_y v \!
+ \!\lambda\!^-\!  \Delta v +\! ( \lambda_sR^\oplus_s\!+\! \lambda_b R_b^\ominus)  \partial_{xx}v\!+ \!(\lambda_s R_s^\ominus \!+\!\lambda_b R^\oplus_b )  \partial_{yy}v\!\right]\!\!.
\end{split}
\end{equation*}

\section{Equilibrium analysis}\label{sec:eqanal}

\subsection{Change of variables}

From now on we focus on the symmetric case where $\lambda_s = \lambda_b = \lambda$ and $c_a = c_b=c$.
First it is convenient to notice that in this important case, we have the following property:
\[\alpha = \beta = [\lambda (R_s^\oplus(u,x,y)  - R_b^\ominus(v,x,y)) -\lambda^-].\]
We will see later that this property allows to solve the problem thanks to the characteristics method.\\
\noindent
There is a very welcome change of variables that we will use throughout this section.
We define 
\begin{equation}\label{cov}
\tilde u = (u-P)/q \; \; \mbox{ and } \; \; \tilde v = (v-P)/q.
\end{equation}
\noindent
Then the (Ask)-(Bid) system reads
\begin{equation}\label{eq:pden}
\begin{split}
0 &= [(\lambda\tilde R_b^\ominus +\lambda^-)\frac 1 x (\frac {\delta}{x-q}-\tilde u) - \frac{c}{q}]  + [\lambda \tilde R^\oplus_s  - \lambda \tilde R^\ominus_b -\lambda^-] \cdot (\partial_x \tilde u + \partial_y \tilde u) ,\\
0 &= [(\lambda \tilde R_s^\ominus +\lambda^-)\frac 1 y( \frac {-\delta}{y-q}-\tilde v) + \frac{c}{q}] + [\lambda \tilde R^\oplus_s  - \lambda\tilde R^\ominus_b -\lambda^-] \cdot (\partial_x\tilde v+  \partial_y\tilde v). 
\end{split}
\end{equation}

\begin{prop}\label{prop:antis}
Assume that system (\ref{eq:pden}) admits a unique solution $(\tilde u, \tilde v)$, then it is antisymmetric, that is:
\[\forall (x,y), \;  \tilde v(x,y) = - \tilde u(y,x).\]
\end{prop}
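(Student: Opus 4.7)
The plan is to reduce Proposition~\ref{prop:antis} to the symmetry result for the original system (\ref{eq:ua})--(\ref{eq:ub}) already established in Section~4.2, by unwinding the affine change of variables (\ref{cov}). The earlier Proposition, in the symmetric case $\lambda_s=\lambda_b=\lambda$, $c_a=c_b=c$, states that $u$ and $v$ are antisymmetric about the fair price, in the form $v(x,y) = 2P - u(y,x)$. Substituting directly into (\ref{cov}) then gives
\[
\tilde v(x,y) \;=\; \frac{v(x,y)-P}{q} \;=\; -\,\frac{u(y,x)-P}{q} \;=\; -\,\tilde u(y,x),
\]
which is the claim. This shortcut works because Proposition~\ref{prop:antis} is merely the earlier identity expressed in the shifted and rescaled coordinates.

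For a self-contained derivation that does not appeal to the earlier Proposition, I would imitate its three-step structure directly at the level of system (\ref{eq:pden}). First, I would record the analogue in the tilde variables of the indicator-symmetry lemma used earlier: since the limit-order thresholds $\pm \delta/(x-q)$ in the tilde system are antisymmetric about zero, direct inspection of the definitions of the routing functionals gives, for any candidate $\phi$ and writing $\phi^{\mathsf T}(a,b) := \phi(b,a)$,
\[
\tilde R_s^{\oplus/\ominus}(\phi, x, y) \;=\; \tilde R_b^{\oplus/\ominus}(-\phi^{\mathsf T}, y, x).
\]
Second, I would introduce $\hat v(x,y) := -\tilde u(y,x)$, plug it into the (Bid) equation of (\ref{eq:pden}), swap $x \leftrightarrow y$ and multiply through by $-1$, and verify---using the identities above together with the symmetry $\alpha = \beta = \lambda(\tilde R_s^\oplus - \tilde R_b^\ominus) - \lambda^-$ of the transport coefficient---that the transformed equation is exactly the (Ask) equation satisfied by $\tilde u$. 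Third, I would invoke the uniqueness hypothesis to conclude $\hat v = \tilde v$.

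The main obstacle is the sign bookkeeping in the second step of the direct approach. The constant waiting cost $-c/q$ in (Ask) must flip sign under multiplication by $-1$ to match the $+c/q$ in (Bid); the trading term $(1/x)(\delta/(x-q) - \tilde u)$ must transform into $(1/y)(-\delta/(y-q) - \hat v)$ after the swap and substitution; and the transport term $(\partial_x + \partial_y)\tilde u$ must become $(\partial_x + \partial_y)\hat v$. Each of these three checks relies exactly on the antisymmetry of $\pm \delta/(x-q)$ about zero, which is why the affine change of variables (\ref{cov}) is the natural one for exposing the underlying symmetry of the problem.
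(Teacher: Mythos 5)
Your proposal is correct and follows essentially the same route as the paper: the paper states Proposition \ref{prop:antis} without a separate proof precisely because it is the earlier discrete-system antisymmetry proposition rewritten through the change of variables (\ref{cov}), and your self-contained variant (indicator-symmetry lemma in tilde coordinates, substitute $\hat v(x,y)=-\tilde u(y,x)$, swap $x\leftrightarrow y$, multiply by $-1$, invoke uniqueness) is exactly the argument the paper uses for that earlier proposition. Note only that you have silently (and correctly) read the earlier statement as $v(x,y)=2P-u(y,x)$, which is what its proof and the present proposition require, rather than the literal ``$u(x,y)+P=P-v(y,x)$'' printed in the text.
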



\noindent
The general form of the system (\ref{eq:pden}) is as follows:
\begin{eqnarray}\label{eq:gensyst1}
&&0 = \;\;\gamma(\tilde u,\tilde v,x,y)+ \alpha (\tilde u,\tilde v,x,y) (\partial_x \tilde u+\partial_y \tilde u)\\
&&0 = \!-\gamma(\tilde v,\tilde u,y,x)+ \alpha(\tilde u,\tilde v,x,y) (\partial_x \tilde v+\partial_y \tilde v).
\end{eqnarray}

 \subsection{First Order Analysis}\label{sec:FOA}
 
Here we explore formally some aspects of the first order approximation to the solution.\\
The key point of the analysis is that in the two equations of system (\ref{eq:pden}), the derivative terms are the same, so that we conclude that the characteristics satisfy \[\dot{x} = \dot{y} = \alpha \Rightarrow x = y+ k.\]
Note that the reasoning of this paragraph holds on the region below the diagonal, but can be trivially extended to the whole domain by symmetry arguments.\\
We heuristically suppose that for a given $k$, and along the characteristic line $y = x - k$, there is a first point $M_0 = (x_0,y_0)$ where the sellers become Liquidity Consumer, that is $M_0$ is a point at the boundary of the regions $R^{++}$ and $R^{-+}$.\\
Then there is a second point $M_1 = (x_1,y_1)$, with $x_1 \geq x_0$ and $y_1 \geq y_0$ where the buyers become Liquidity Consumer, that is $M_1$ is a point at the boundary of the regions $R^{-+}$ and $R^{--}$.\\
First recall that:
\begin{equation*}
\begin{split}
& R^{++}  \mbox{ is defined by } R_s^\oplus = 1 \mbox{ and } R_b^\oplus = 1, \\
& R^{-+}  \mbox{ is defined by } R_s^\oplus = 0 \mbox{ and } R_b^\oplus = 1, \\
& R^{--}  \mbox{ is defined by } R_s^\oplus = 0 \mbox{ and } R_b^\oplus = 0. \\
\end{split}
\end{equation*}
We can write the differential equations on the three regions mentioned above:
\begin{equation}\label{eq:firstordereqs}
\begin{split}
(A_{R^{++}}) \;\;\;0 &= \Big[\frac {\lambda^-} x (\frac {\delta}{x-q}-\tilde u) - \frac{c}{q}\Big]  + [\lambda  -\lambda^-] \cdot (\partial_x \tilde u + \partial_y \tilde u) ,\\
(B_{R^{++}}) \;\;\;0 &= \Big[\frac {\lambda^-} y( \frac {-\delta}{y-q}-\tilde v) + \frac{c}{q}\Big] + [\lambda  -\lambda^-] \cdot (\partial_x\tilde v+  \partial_y\tilde v), \\
(A_{R^{-+}}) \;\;\;0 &= \Big[\frac {\lambda^-} x (\frac {\delta}{x-q}-\tilde u) - \frac{c}{q}\Big]  + [  -\lambda^-] \cdot (\partial_x \tilde u + \partial_y \tilde u) ,\\
(B_{R^{-+}}) \;\;\;0 &= \Big[\frac {\lambda + \lambda^-} y( \frac {-\delta}{y-q}-\tilde v) + \frac{c}{q}\Big] + [  -\lambda^-] \cdot (\partial_x\tilde v+  \partial_y\tilde v), \\
(A_{R^{--}}) \;\;\;0 &= \Big[\frac {\lambda + \lambda^-} x (\frac {\delta}{x-q}-\tilde u) - \frac{c}{q}\Big]  + [-\lambda  -\lambda^-] \cdot (\partial_x \tilde u + \partial_y \tilde u) ,\\
(B_{R^{--}}) \;\;\;0 &= \Big[\frac {\lambda + \lambda^-} y( \frac {-\delta}{y-q}-\tilde v) + \frac{c}{q}\Big] + [-\lambda  -\lambda^-] \cdot (\partial_x\tilde v+  \partial_y\tilde v).
\end{split}
\end{equation}
The equations are relatively simple in each region. The tricky point is, as always, to stick together the solutions of each region. First we compute the boundaries of the regions.

\paragraph{First order boundaries.}

Let's note $M_0$ the first order boundary between $R^{++}$ and $R^{-+}$ and $M_1$ between $R^{-+}$ and $R^{--}$.

\begin{prop}[First order boundary between $R^{++}$ and $R^{-+}$]
  \label{prop:M0}
  The diagonal point of the boundary $M_0$ is the point
  \begin{equation}\label{x0diag}
    (x_0^*,x_0^*) = (q+\sqrt{q^2+8/\eta})/2
  \end{equation}
  and the boundary $M_0$ is given by the set of points $(x_0,y_0)$ verifying:
  \begin{equation}
    \label{eq:M0}
    (x_0,y_0) = \Big(x_0,l(x_0):= q+ \Big(\eta x_0-\frac 1 {x_0-q}\Big)^{-1}\Big), \; \forall x_0 \geq x_0^*,
  \end{equation}
  where $\eta := {c}/{( \delta q\lambda^-)}$.
\end{prop}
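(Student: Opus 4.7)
The plan is to characterize $M_0$ by combining two ingredients: the indifference condition that defines when a seller switches from LP to LC, and a matching condition for the first-order PDEs across the free boundary. From the definition of $\RS^\oplus$ in (\ref{eq:routing}), rescaled via $\tilde u=(u-P)/q$, a seller routing her order is on the LP/LC margin precisely when $\tilde u(x+q,y)=\PS(y)-P)/q=-\delta/(y-q)$; taking the continuous limit, the boundary curve $M_0$ is the locus
\[
\tilde u(x_0,y_0)=-\frac{\delta}{y_0-q}. \tag{$\star$}
\]

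Next, I would exploit the structure of the first-order equations in (\ref{eq:firstordereqs}) on either side of $M_0$. The equations $(A_{R^{++}})$ and $(A_{R^{-+}})$ share the same source term $\frac{\lambda^-}{x}\bigl(\frac{\delta}{x-q}-\tilde u\bigr)-\frac{c}{q}$, but the coefficient of the characteristic derivative $(\partial_x+\partial_y)\tilde u$ jumps from $\lambda-\lambda^-$ to $-\lambda^-$ as one crosses from $R^{++}$ into $R^{-+}$. Under the natural assumption that $\tilde u$ and its characteristic derivative extend continuously to $M_0$ from both sides, subtracting the two equations at $(x_0,y_0)\in M_0$ forces
\[
\lambda\,(\partial_x+\partial_y)\tilde u(x_0,y_0)=0,
\]
so the characteristic derivative must vanish on the free boundary. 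Substituting this back into $(A_{R^{++}})$ reduces it to the algebraic balance $\frac{\lambda^-}{x_0}\bigl(\frac{\delta}{x_0-q}-\tilde u(x_0,y_0)\bigr)=\frac{c}{q}$, and inserting $(\star)$ yields
\[
\frac{\delta\lambda^-}{x_0}\left(\frac{1}{x_0-q}+\frac{1}{y_0-q}\right)=\frac{c}{q}, \quad\text{i.e.}\quad \frac{1}{x_0-q}+\frac{1}{y_0-q}=\eta x_0.
\]

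Solving this relation for $y_0$ gives directly $y_0=q+\bigl(\eta x_0-1/(x_0-q)\bigr)^{-1}=l(x_0)$, the claimed formula (\ref{eq:M0}). Setting $x_0=y_0=x_0^*$ on the diagonal reduces the relation to $2/(x_0^*-q)=\eta x_0^*$, i.e.\ the quadratic $(x_0^*)^2-qx_0^*-2/\eta=0$, whose positive root is $x_0^*=(q+\sqrt{q^2+8/\eta})/2$, matching (\ref{x0diag}). The constraint $x_0\geq x_0^*$ comes from requiring $\eta x_0-1/(x_0-q)>0$ so that $l(x_0)>q$: this quantity equals $1/(x_0^*-q)>0$ at $x_0=x_0^*$ and is increasing in $x_0$, and $l$ is correspondingly decreasing, so $(x_0,l(x_0))$ traces a curve starting on the diagonal and descending strictly below it.

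The main obstacle is the derivative-matching step: the conclusion $(\partial_x+\partial_y)\tilde u(x_0,y_0)=0$ rests on a $C^1$-type regularity assumption for $\tilde u$ across the free boundary, which in a fully rigorous treatment would require a Rankine--Hugoniot argument for this piecewise-coefficient first-order hyperbolic system, or an a priori smoothness result. In the heuristic first-order framework of this section, however, this is simply the natural compatibility condition making the two one-sided equations consistent at $M_0$, and everything else is a direct algebraic consequence of it together with $(\star)$.
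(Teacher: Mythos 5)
Your proof follows essentially the same route as the paper's: the indifference condition $\tilde u(x_0,y_0)=-\delta/(y_0-q)$ from the jump of $R_s^\oplus$, combined with the vanishing of the source term forced by the discontinuity of the coefficient of $(\partial_x+\partial_y)\tilde u$ across the boundary, yielding $\eta x_0 = 1/(x_0-q)+1/(y_0-q)$ and hence both formulas. Your explicit subtraction argument and the remark on the regularity needed at the free boundary only make more precise what the paper states tersely, so the two proofs coincide.
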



\begin{prop}[First order boundary between $R^{-+}$ and $R^{--}$]
  \label{prop:M1}
  The boundary $M_1$ is defined by the set of points $(y_1+x_0-l(x_0),y_1), \; \forall x_0 \geq x_0^*,$
  where
  \begin{equation}\label{eq:boundM1}
    y_1 \mbox{ verifies } f_{x_0-l(x_0)}(y_1) = \frac \delta {y_1+x_0-l(x_0)-q}.
  \end{equation}
\end{prop}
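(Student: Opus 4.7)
The plan is to derive (\ref{eq:boundM1}) by combining the characteristic structure of the first-order system in $R^{-+}$ with the boundary condition encoding the switch of the buyer's routing $R_b^\oplus$ from $1$ to $0$ across $M_1$.

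First, I identify the switching condition at $M_1$. Crossing from $R^{-+}$ into $R^{--}$ flips $R_b^\oplus$ from $1$ to $0$, which by (\ref{eq:routing}), in the continuous limit where the $+q$ shift in the argument of $v$ is subleading, and with the change of variables (\ref{cov}), amounts to
\begin{equation*}
\tilde v(x_1,y_1)\;=\;\frac{\delta}{x_1-q}.
\end{equation*}
The characteristic method applies because system (\ref{eq:firstordereqs}) has identical transport coefficients along $\partial_x$ and $\partial_y$, so the characteristics are the diagonals $x-y=k$. Since $M_0=(x_0,l(x_0))$ lies on the characteristic with $k=x_0-l(x_0)$, any point reached along the same characteristic inside $R^{-+}$ must have the form $(y_1+x_0-l(x_0),\,y_1)$ with $y_1\geq l(x_0)$, which is exactly the parametrization in the statement.

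Next I would construct $f_k(y):=\tilde v(y+k,y)$ in $R^{-+}$. Along the characteristic $x=y+k$, the derivatives collapse: $\partial_x\tilde v+\partial_y\tilde v=df_k/dy$, so equation $(B_{R^{-+}})$ of (\ref{eq:firstordereqs}) reduces to the scalar linear ODE
\begin{equation*}
\lambda^-\,\frac{df_k}{dy}\;=\;\frac{\lambda+\lambda^-}{y}\!\left(-\frac{\delta}{y-q}-f_k\right)+\frac{c}{q}.
\end{equation*}
Its initial condition at $y=l(x_0)$ is the value $\tilde v(x_0,l(x_0))$, itself obtained by the analogous integration of $(B_{R^{++}})$ along the same characteristic starting from the diagonal $x=y$, where $\tilde v(x,x)$ is pinned by the antisymmetry of Proposition \ref{prop:antis} together with the diagonal analysis already carried out in Proposition \ref{prop:M0}. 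Substituting the switching condition $f_k(y_1)=\delta/(x_1-q)$ with $x_1=y_1+x_0-l(x_0)$ gives precisely (\ref{eq:boundM1}).

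The main obstacle is twofold. First, the scalar ODE for $f_k$ has $y$-dependent coefficients whose integrating factor is not elementary, so one must keep $f_k$ implicit and reason by monotonicity rather than by explicit formula. Second and more delicately, one needs existence and uniqueness of the root $y_1\geq l(x_0)$ of (\ref{eq:boundM1}): existence comes from a sign comparison between $f_k(y)$ and $\delta/(y+k-q)$ at $y=l(x_0)$, where the strict inequality $f_k<\delta/(y+k-q)$ holds because buyers are still Liquidity Providers in $R^{-+}$, versus its reversal for large $y$ driven by the accumulated waiting cost $c/q$ and the now unbalanced sell flow; uniqueness then follows from the monotonicity of the two sides of (\ref{eq:boundM1}) having opposite slopes along the characteristic.
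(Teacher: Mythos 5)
Your overall strategy is the paper's: reduce $(B_{R^{-+}})$ to a scalar ODE for $f_k(y)=\tilde v(y+k,y)$ along the characteristics $x-y=k$, and define $M_1$ as the intersection of $f_k$ with the buyer's switching level $\delta/(x_1-q)$. The ODE you write is exactly the paper's equation $f'+\frac{a}{y}f+\bigl(\frac{b}{y(y-q)}+d\bigr)=0$ with $a=1+\lambda/\lambda^-$, $b=a\delta$, $d=-\delta\eta$, and the switching condition at $M_1$ is correct.

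There is, however, a genuine gap in how you fix the solution of that ODE. You propose to obtain the initial value $\tilde v(x_0,l(x_0))$ by integrating $(B_{R^{++}})$ ``along the same characteristic starting from the diagonal $x=y$''. But the characteristic through $M_0=(x_0,l(x_0))$ is the line $x=y+k$ with $k=x_0-l(x_0)>0$, which is \emph{parallel} to the diagonal and never meets it, so there is no diagonal point from which to start the integration; nor does the $M_0$ analysis give you $\tilde v$ at $M_0$ directly (it pins $\tilde u(x_0,l(x_0))=-\delta/(l(x_0)-q)$, hence by antisymmetry the value of $\tilde v$ at the \emph{reflected} point $(l(x_0),x_0)$, not at $M_0$). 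The paper's mechanism is different: after the variation-of-constants solution with the small-$q$ simplification $y-q\approx y$, which gives the explicit form $f(y)=\mathcal{C}\,y^{-a}-\frac{b}{a-1}y^{-1}-\frac{d}{a+1}y$, the constant $\mathcal{C}(x_0)$ is determined by applying the antisymmetry of Proposition \ref{prop:antis} \emph{along the characteristic itself}, yielding the closure relation $f_{x_0-l(x_0)}(l(x_0))=-f_{x_0-l(x_0)}(x_0)$ and hence formula (\ref{eq:C}). This antisymmetry closure is the key step and it is absent from your argument. A secondary point: the statement's $f_{x_0-l(x_0)}$ refers to this explicit function; by refusing the explicit integration (``keep $f_k$ implicit''), you also leave the object named in (\ref{eq:boundM1}) undefined. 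Your existence/uniqueness discussion for the root $y_1$ is a reasonable addition not present in the paper, but it does not repair the missing determination of the constant.
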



See Sections \ref{sec:proofM0} and \ref{sec:proofM1} in Appendix for the proofs.
\\

Figure \ref{fig:FOS} exhibits an instance of the first order curves. We observe that near the diagonal, there is a region where several solutions could happen. The first order analysis thus shows the global form of the shape of the solution (since it is based on the curves $M_0, \; M_1$), and that considering higher order terms is necessary to understand what happens in the region near the diagonal.

\begin{figure}[!h]
\includegraphics[width=.8\linewidth]{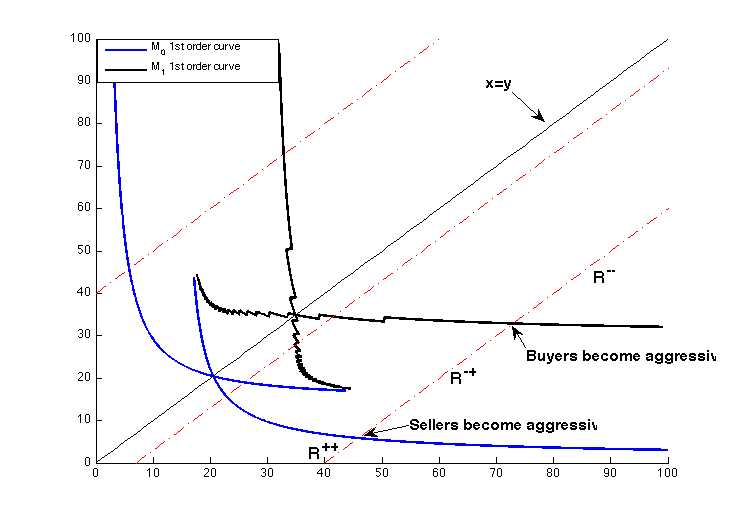}
\caption{First order decision curves}\label{smallq}
\label{fig:FOS}
\end{figure}

\subsection{Second Order Equations}
According to section \ref{approx}, the general form of the second order equations is:
\begin{equation}\label{eq:gensyst2}
\begin{split}
&0 = \;\;\gamma(\tilde u,\tilde v,x,y)+ \alpha (\tilde u,\tilde v,x,y)(\partial_x \tilde u+\partial_y \tilde u)\\
&\;\;\;\;\;+ q \Big( \rho (\tilde v,x,y))\partial_x \tilde u + \xi_1 (\tilde u,\tilde v,x,y) \partial_{xx} \tilde u+ \xi_2 (\tilde u,\tilde v,x,y) \partial_{yy} \tilde u \Big),\\
&0 = \!-\gamma(\tilde v,\tilde u,y,x)+ \alpha(\tilde u,\tilde v,x,y) (\partial_x \tilde v+\partial_y \tilde v)+\\
&\;\;\;\;\;+q \Big( \rho (\tilde u,y,x)) \partial_y \tilde v + \xi_1 (\tilde u,\tilde v,x,y) \partial_{xx} \tilde v+ \xi_2 (\tilde u,\tilde v,x,y) \partial_{yy} \tilde v \Big),
\end{split}
\end{equation}
where:\\
 $\rho = \frac 1 x (\lambda R_b^\ominus+\lambda^-)$,  $\xi_1 = (\lambda(R^\oplus_s\!+\!R_b^\ominus)+\lambda^-)/2$, and 
$\xi_2 = (\lambda(R^\ominus_s\!+\!R_b^\oplus)+\lambda^-)/2.$

\bigskip
\noindent
See Section \ref{sec:regions2nd} in Appendix for the local equations on the same four regions.

\bigskip
\noindent
In the next part, we provide several example of markets based on the model.

\section{Applications}\label{sec:appli}

This section is dedicated to applications of the MFG model to study the outcome of a combination of different trading behaviors in the same order book.

The purpose here is not to study how the price discovery operates on the long term, but how microstructure effects can deviate transaction prices from the fair price.
This model will thus explain how the state of the liquidity can change the dynamics of the price while forming an equilibrium price. This equilibrium can potentially deviate from the latent (or fair) price. The main drivers of these modifications will be the behavior of trading agents, and specifically the average size of their orders, their speed, their waiting cost, and how often they use optimized strategies (see Table \ref{tab:paramIIHFT} for a qualitative description of the main parameters of their strategies).

In this section, we will investigate theoretically and using simulations the reasons why the price deviates or not from the ``\emph{fair price}'' (exogenously fixed).
The variables of interest are:
\begin{itemize}
\item the asymptotic state of the liquidity offer (i.e. the size of the bid queue and ask queue): are they large or small? are they balanced?
\item The average transaction price: how far away it is from the fair price? 
\item The average value of the bid-ask spread; in which conditions is it high or low?
\end{itemize}
%
Having in mind that each time an agent buys or sells she suffers from market impact, i.e. consuming liquidity implies paying enough to find counterparts (this premium decreases with the size of the consumed queue), the strategy of each market participant affects her price. We will thus be able to compute an average price for each class of market participants, answering the question: do the institutional investors pay more than high frequency traders?

It will also allow us to compute an \emph{effective bid-ask spread} being twice the difference between the mid price and the transaction price; it will not be the same for each market participant.

\begin{defi}[Effective bid-ask spread]
  The effective bid ask spread of an agent $A$ is the expected transaction price of its liquidity removing buying orders minus the one of its liquidity removing selling orders:
  \begin{equation}
    \label{eq:defesp}
    \psi^e(A) := \delta\cdot\Esp \left(  \left. \frac{q}{Q^a_t} \right| R^\ominus_{\rm buy} (A)\right)  + %
    \delta\cdot\Esp \left(  \left. \frac{q}{Q^b_t} \right| R^\ominus_{\rm sell} (A)\right)  .
  \end{equation}
\end{defi}
\noindent
The effective spread is higher for an impatient agent if the spread is larger when she consumes liquidity than when she provides liquidity.


Another important tool is the \emph{invariant measure} describing the repartition of the agents in the (liquidity) state space, being the probability of having the system in a specific region of the state space. Our state space is captured by the sizes of the two queues (the bid queue and the ask queue).\\
Thanks to the results obtained in the previous sections, we will be able not only to observe discrepancies between agents' behaviour and their outcome, but also to explain and understand them in details.

{\def\dmt{$\cdot 10^{-3}$} \def\dmd{$10^{-2}$}
\begin{table}[h!]
  \centering
  \begin{tabular}{c||c|c|c|c||c|c|}
                      & Test 1 & Test 2 & Test 3 & Test 4 & Test 5 & Test 6 \\ \hline
    $q_{ii}$            &    1      &  0.25   &    1     &     1      &    1     &     1 \\
    $\lambda_{ii}$ &    1      &     1     &    1     &   0.5     &    0.5  &    0.6 \\
    $\lambda^-_{ii}$ & 0.2   &   0.2   &    0.2   &   0.5    &  0.5    &   0.4 \\ 
    $c_{ii}\cdot q_{ii}$  &  2.5\dmt & 2.5\dmt & \dmd & 2.5\dmt & 2.5\dmt & 2.5\dmt \\\hline 
    $q_{\mbox{\tiny HFT}}$                & -   &  -   &    -     &     -          &    0.25    &   0.25 \\
    $\lambda_{\mbox{\tiny HFT}}$     & -    &     -     &    -     &   -     &    4  &     3.6 \\
    $\lambda^-_{\mbox{\tiny HFT}}$ & - &   -   &    -   &   -              &  0    &   0.4 \\ 
    $c_{\mbox{\tiny HFT}}\cdot q_{\mbox{\tiny HFT}}$ & -     & -       & - & -               & \dmd & \dmd \\\hline
  \end{tabular}
  \caption{Parameters defining the studied models.}
  \label{tab:models}
\end{table}
}

\noindent
A first subsection is dedicated to applications with models including one class of agents only, to understand and explain in details the mechanisms that our MFG model can render. 
In a second subsection we will use an \emph{heterogenous agent model}, allowing to understand the result of putting together more than one class of market participants. Here we mix Institutional Investors and High Frequency Traders.
Section \ref{sec:twogroupseq} presents a theoretical expansion of Section \ref{sec:obmodel} needed to handle more than one agent class.
Table \ref{tab:models} summarizes the different models and their parameters.


\subsection{Markets with Institutional Investors only} 
\label{sec:IIonlysims} 

\subsubsection{Modeling Institutional Investors}

Since we just want to model one class of market participants, their specification is not very important. It will become crucial when we mix different types of agents: the relative speed, the relative sizes of orders, etc., will play a role of paramount importance in the multi-agent simulations.

With one type of investors only we mainly focus on using realistic values and exploring the sensitivities of the emerging dynamics to the values of the parameters.
Note first that some parameters define the framework of the simulation and not the market participants themselves:
\begin{itemize}
\item we have seen this in the change of variable (\ref{cov}) that the \emph{fair price} $P$ has no impact on the dynamics, it is taken as a constant,
\item the \emph{market depth} $\delta$, playing a role in the
  expression of the market impact of one trade (at the first order it
  is homogenous to Kyle's lambda). Looking carefully at the market
  impact expression \eqref{eq:execprice}, it can be read that $\delta$
  is homogenous to the inverse of a quantity: dividing $\delta$ by two and multiplying quantities by two will not change the dynamics but relatively increase the waiting costs (that are proportional to the order size $q$).
\end{itemize}
\medskip

Other parameters are directly associated with the agent:
\begin{itemize}
\item the size of her orders $q$,
\item the intensity $\lambda$ of the Poisson process governing the arrival rate of smart routed orders;
\item the intensity $\lambda^-$ of the Poisson process governing the
  arrival of not smart routed orders (i.e. blindly sending market
  orders or having infinite waiting costs);
\item the cost of waiting per share $c$: waiting $dt$ seconds is worth $cq\,dt$.
\end{itemize}
\medskip

\noindent
Some simple statistics on equity markets can give reasonable figures for these parameters (
see \cite{citeulike:12047995} for more details about evolution of trading behaviours from 2007 to 2013):
\begin{itemize}
\item the intensity $\Lambda = \lambda+\lambda^-$ can be roughly estimated by the average number of trades per time unit ;
\item the size $q$ has no unit (it will have a role when compared to the size of HFT orders); for the sake of simplicity we will take it equals to one. For information the table gives the \emph{average trade size} and the \emph{average size at first limit}.
\end{itemize}

\subsubsection{Simulations and results}


First we consider the case of a single group of traders all with the same order size $q$.
The elementary algorithm we use to compute the equilibrium is as follow:
\begin{enumerate}
\item Initialize $u^0$ and $v^0$ (e.g. to the constant function equal to $P$)
\item Step $k$: \\
compute $u^{k}$ the solution to equation (\ref{eq:ua}) using the inputs $u^{k-1}$ and $v^{k-1}$,\\
compute $v^{k}$ the solution to equation (\ref{eq:ub}) using the inputs $u^{k}$ and $v^{k-1}$.
\end{enumerate}

\paragraph{Equilibrium as an invariant measure.}
Equilibrium visualization is made of the level sets of the stationary measure of queue sizes. The previous measure is computed from the transition probability depending upon $u$ and $v$. \\
More precisely, the transition process at a certain state $(Q_a,Q_b)$:
$$\begin{array}{ccccl}
(Q_a,Q_b) &\rightarrow & (Q_a,Q_b) &\mbox{ with probability } &1 - 2 \lambda dt - 2 \lambda^- dt\\
(Q_a,Q_b) &\rightarrow &(Q_a+q,Q_b) &\mbox{ with probability }& \lambda R_s^\oplus dt \\
(Q_a,Q_b) &\rightarrow &(Q_a-q,Q_b) &\mbox{ with probability } & \lambda R_b^\ominus dt + \lambda^- dt \\
(Q_a,Q_b) &\rightarrow &(Q_a,Q_b+q) &\mbox{ with probability } & \lambda R_b^\oplus dt\\
(Q_a,Q_b) &\rightarrow &(Q_a,Q_b-q) &\mbox{ with probability } &\lambda R_s^\ominus dt + \lambda^- dt, \\
\end{array}$$
where we use the probability of occurrence of events as described in equations \eqref{eq:ua} and \eqref{eq:ub}.

The resulting process has of course the Markov property.

\begin{figure}[!ht]\centering
\hspace{-.25cm}\subfigure[Mapping of the decision regions]{\includegraphics[width=0.75\linewidth]{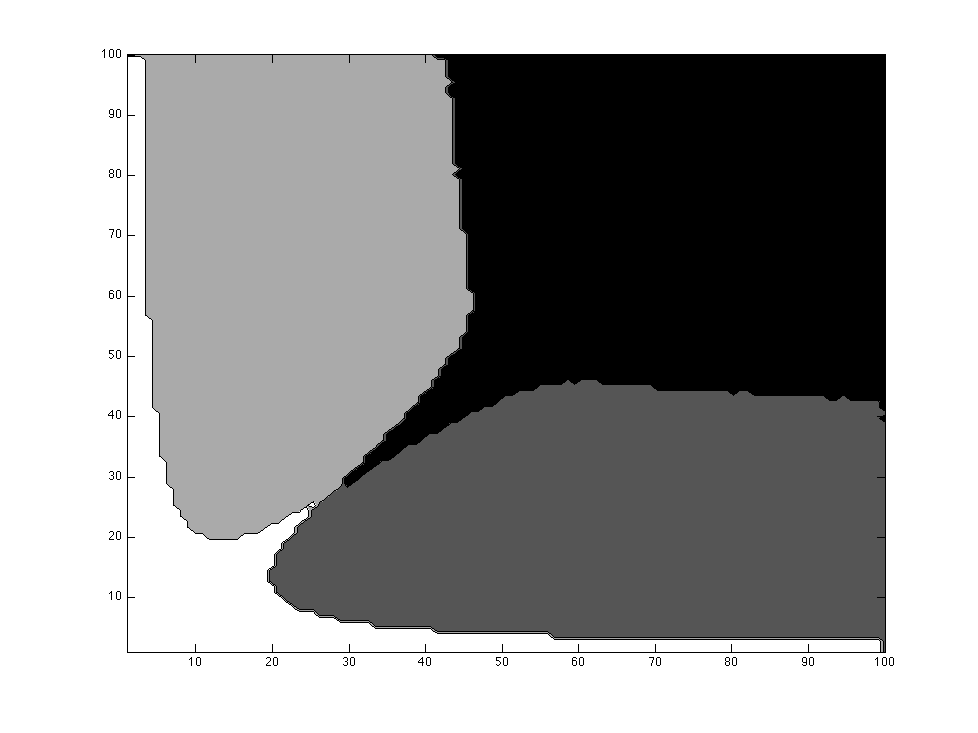}}
\hspace{-.5cm} \subfigure[Corresponding invariant measure]{\includegraphics[width=0.75\linewidth]{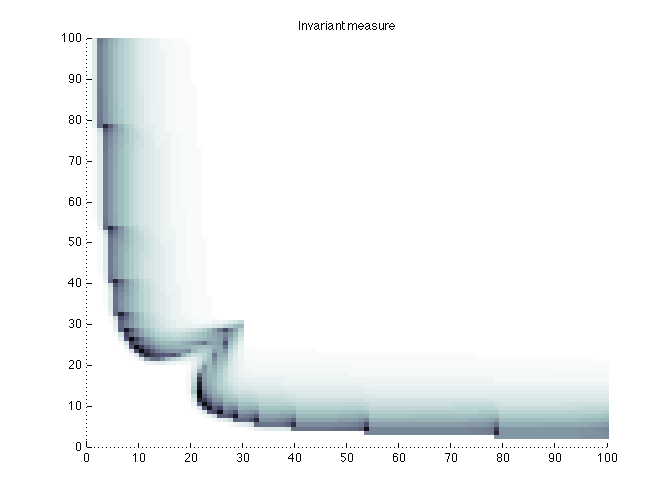}}
\caption{Test 1: the numerical solution for a single homogeneous specie of traders. %
(a) The routing decision regions have the expected form: for small queue sizes (in white) the buyers and sellers act mostly as liquidity providers, for large queues (in dark) they both act as liquidity consumers, in between (in grey) only one type of agent (buyers or sellers) consume liquidity while the other provides liquidity (this last case correspond to liquidity imbalances). %
(b) The invariant measure exhibits two symmetric cavities: (the white zones figure low concentration of agents while the dark ones are for frequent stable points of the state space); it reads that the liquidity imbalances ((a) grey zones) can be stable.%
}\label{fig:res1specie}
\end{figure}

\begin{figure}[!ht]
\includegraphics[width=.9\linewidth]{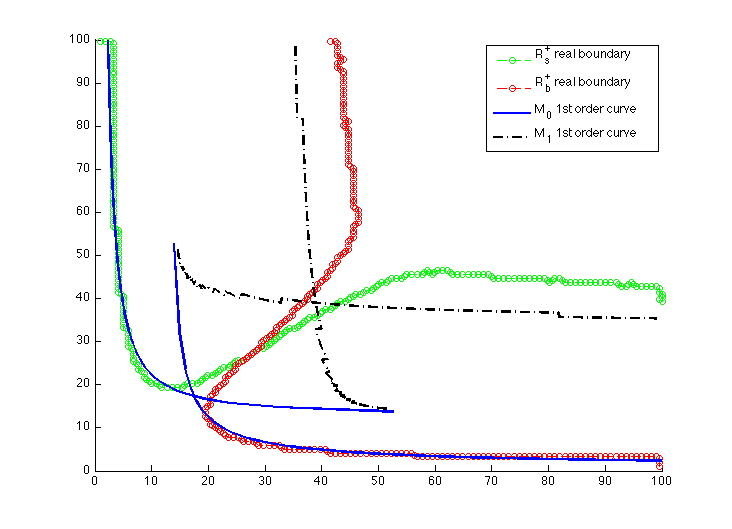}
\caption{Test 1: the numerically computed switching curves (red and green dots) tries to conciliate the curves analytically computed at order 1 (dotted and solid lines).}\label{fig:bounds1}
\end{figure}

\begin{figure}[!ht]
\includegraphics[width=\linewidth]{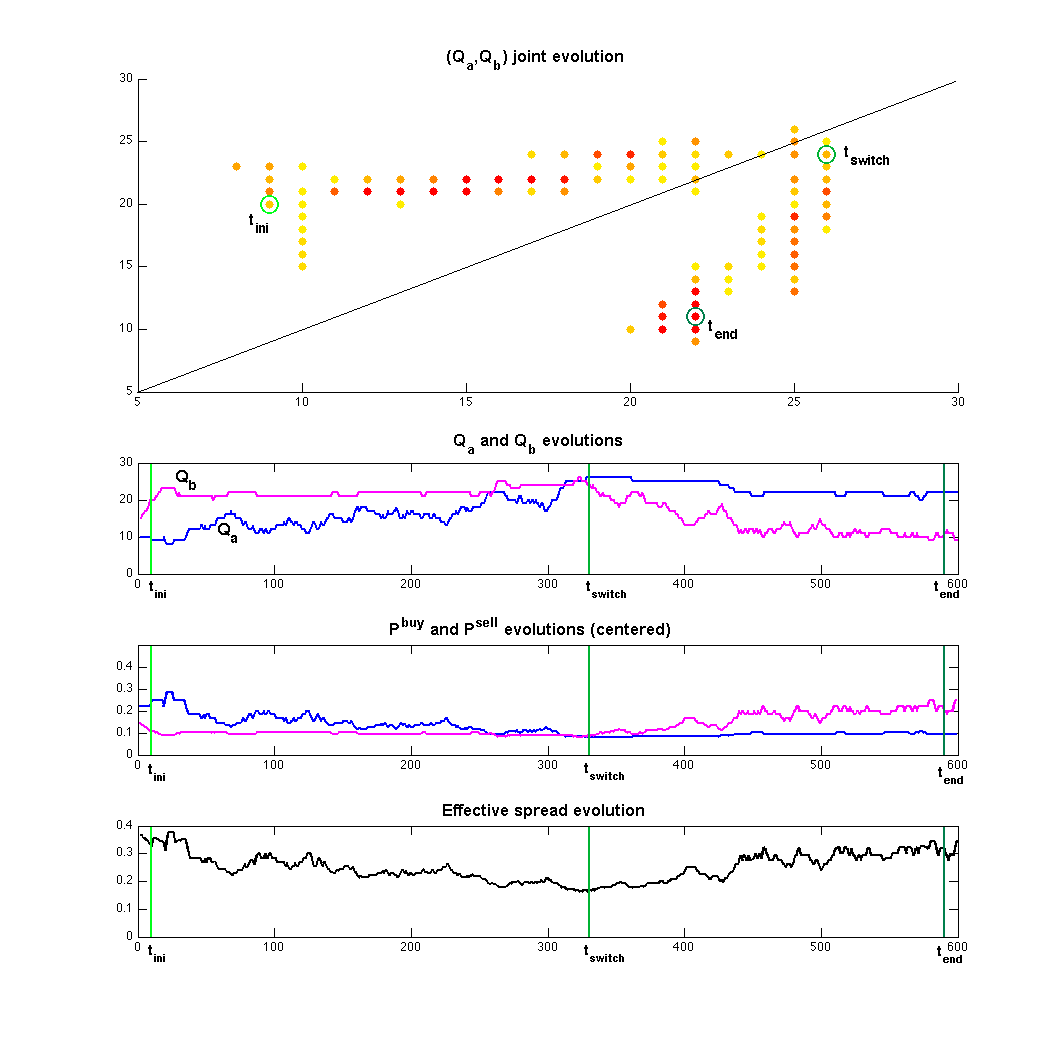} 
\caption{Here we show a particular simulation of the orderbook. We plot the evolution of various quantities for 600 instants. A sample trajectory of coupled queue sizes is plotted. Dots are colored from yellow (1 visit) to red (about 15 visits). Note that 3 milestones are introduced. We observe a change of regime at instant $t_{switch}$ where the market activity switches from the ask queue to the bid queue.}\label{fig:pd}
\end{figure}


\paragraph{Test 1: institutional traders with many SOR arrivals (i.e. very few impatient traders).}

The first numerical test corresponds to the following set of parameters: $q=1, \delta=2, c = 2,5 \times 10^{-3}, \lambda =1,$ and $ \lambda^-=0.2.$

We show the results in Figure \ref{fig:res1specie}. We observe that the decision regions $R^{++}, R^{-+}, R^{+-},$ and $R^{--}$ have the expected form. We also remark that the second order term selects a particular solution amongst all order one solutions. Mathematically, this has to be linked to the notion of viscosity solutions, but we do not enter in the details here \cite{citeulike:7800477}. 

Below the diagonal, that is for values of $Q_b$ smaller than $Q_a$, the region where both sellers and buyers are Liquidity Provider corresponds to small $Q_a$ and $Q_b$, then the sellers turn to be Liquidity Consumer while the buyers remain liquidity adders, and finally they also turn to be Liquidity Consumer.

\bigskip
\noindent
The invariant measure is almost concentrated on the points where both sellers and buyers turn to be Liquidity Consumer, i.e. it is concentrated on the boundary curve $M_0$ describing the frontier between $R^{++}$ and $R^{-+} \& \; R^{+-}.$ From now on, we refer to this curve as the {\it \PC{} switching curve} for Provider to Consumer switching curve. We symmetrically define the {\it \CP{} switching curve} as the frontier between $R^{+-} \; \& \; R^{-+}$ and $R^{--}$.

However, the most remarkable point is that the invariant measure shows two bumps, located in the cavities of the \PC{} switching curve. In the new reference frame after a $\pi/4$ axis rotation, the cavities corresponds to the global minimum points of the \PC{} switching curve.

Here the economic intuition is that there are two symmetric liquidity pools, one on the buy side, where only buy orders are completed, and conversely for the sell side.

\begin{figure}[!h] 
\subfigure[The invariant measure has the same form, it is concentrated on smaller values.]{\includegraphics[width=.85\linewidth]{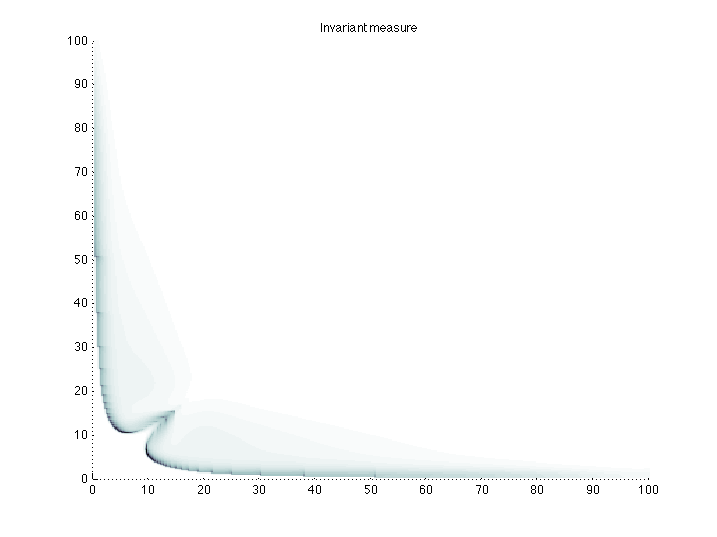}} \subfigure[The real PC switching curves is closer to the 1st order switching curve. ]{\includegraphics[width=.85\linewidth]{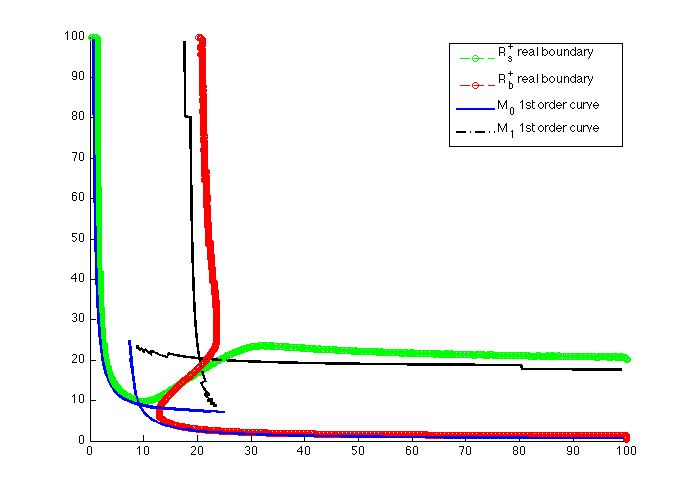}}
\caption{Test 2 (smaller order size $q$): comparing the invariant measure (a) and the switching curves (b) with the one of test 1 (Figure \ref{fig:res1specie}:b and Figure \ref {fig:bounds1}), we deduce the second order terms impacts.}\label{smallq}
\end{figure}

\bigskip 
\noindent 
Figure \ref{fig:bounds1} shows the form of the routing decision boundaries {\it versus} the first order analytical curves derived earlier. It is noteworthy to observe that the {\it real} switching curves tries to conciliate the curves analytically computed at order 1. However, it approximates better the \PC{} switching curve than the \CP{} switching curve. 

\bigskip
\noindent
Several visualizations of system trajectories are possible. We display the time evolution (for 600 instants) of various quantities in Figure \ref{fig:pd}. In this example there are mainly two distinct regimes: from instant $t_{ini}$ to $t_{switch} = 330$, the activity is mainly concentrated on the ask queue. During the second period, most of the activity holds on the bid side.

The coupled trajectory of queue sizes in the space $Q_a \times Q_b$ is another possible visualization.  In the plot, dots are colored from yellow to red, according to the number of time the queue system passes through the corresponding size configuration. Here also we see the hange of regime at $t_{switch}$ where the process  goes through the diagonal, jumping from the ask activity zone to the bid activity one. One more time both liquidity configurations are visible. Above the diagonal $Q_a=Q_b$, red dots are more likely to be horizontally distributed (meaning that most of transactions hold on the ask side), and symmetrically below the diagonal. 

Both visualizations confirm the phenomenon that one could expect after looking at the invariant measure plot.

\begin{figure}[h!]\centering
\hspace{-.25cm}%
\subfigure[Test 1: $c = 2.5\times10^{-3}$]{\includegraphics[width=.5\linewidth]{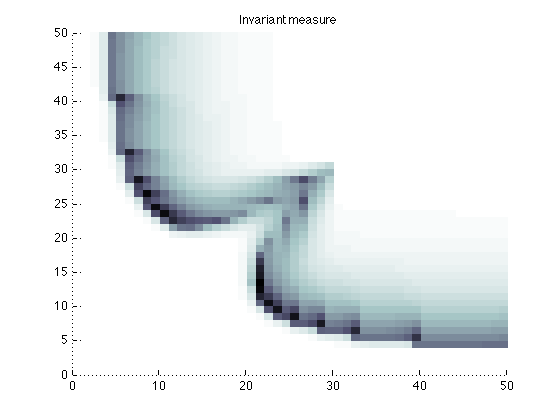}} %
\subfigure[Test 3: $c_h = 10^{-2}$]{\includegraphics[width=.5\linewidth]{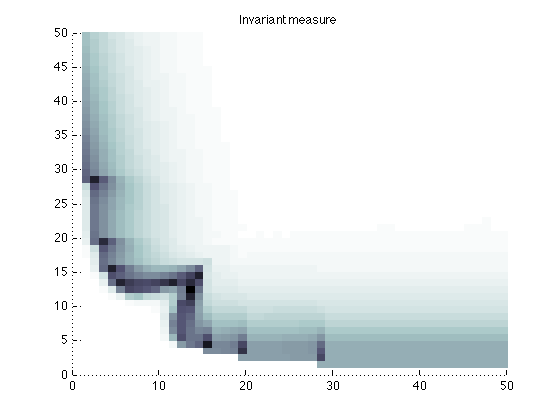}} 
\caption{The two symmetric bumps vanish and a single bump appears on the diagonal}\label{fig:cC}
\end{figure}

\begin{figure}[h!]
\hspace{1cm}\includegraphics[width=.75\linewidth]{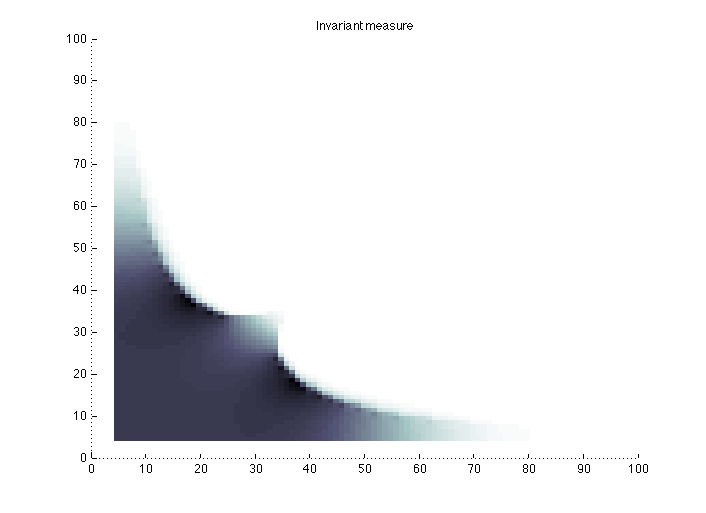}
\caption{Test 4: the case with half of very impatient (i.e. Non-SOR) orders.}\label{fig:halfhalf}
\end{figure}

\paragraph{Test 2: order size impact.} 
Now we only change the value of $q$, and take it smaller than in test 1: $q=0,25$. Figure \ref{smallq} shows that the real \CP{} switching curve is closer to the 1st order curve, which is natural since taking $q$ smaller means that the second order term impacts are shrunk (as expected comparing equations \ref{eq:secondordereqs} with equations \ref{eq:firstordereqs}).

\paragraph{Test 3: risk aversion impact.}
Figure \ref{fig:cC} shows the impact of a bigger $c$ on the solution. We compare the results obtained for the set of parameters of Test 1 with the results obtained for the same parameters except the value of the new risk aversion $4 \times c = 10^{-2}$. We observe mainly two effects:
\begin{itemize}
\item queue sizes are shrunk (from about 30 to 10);
\item the invariant measure maximum is now on the diagonal. Therefore, the two antisymmetric liquidity pools progressively disappear.
\end{itemize}

\paragraph{Test 4: the case with half non-SOR (i.e. impatient) orders.}
Here we provide a stationary equilibrium instance in a case where half of the order arrivals are non-SOR. Figure \ref{fig:halfhalf} shows that at the equilibrium there are still two symmetric regions of concentration at the neighborhood of the \PC{} switching curves. In such a case we observe more density on low queue sizes.

\subsubsection{Possible liquidity imbalance with one class of participant only}

First remind that the \emph{type of a participant} is described by the way she interacts with order books. 
Hence a pension fund taking long term positions, a low frequency statistical arbitrageur, and the hedging desk of an investment bank will have the same \emph{type}. The important elements being they all:
\begin{itemize}
\item take a decision before starting to interact with the order book,
\item do not use a smart order router systematically (i.e. can be very  impatient),
\item trade with relatively large orders, even once their \emph{meta orders} have been split thanks to an optimal trading scheme (like in \cite{citeulike:5797837},\cite{OPTEXECAC00},\cite{citeulike:6615020} or \cite{GLFT}).
\end{itemize}
\medskip
\noindent
The outcome of the application to one class of investors is that the market can suffer for long and stable liquidity imbalances. We have seen that in such typical cases the bid and ask queues are in an asymmetric configuration:
\begin{itemize}
\item one of the queues (the ask one, for instance) is significantly shorter than the other,
\item the flow of buyers considers that the price to pay to wait is too high and accept to pay the market impact on a small queue,
\item the flow of seller notices that they can obtain a fast trade being passive (i.e. going into the bid queue), since 100\% of the buyers are now impatient.
\end{itemize}
This leads to a stable state of the order book: the invariant measure sees two symmetrical concentrations of such configurations, dominating more balanced states located in the diagonal (see Figure \ref{fig:res1specie}:b, \ref{smallq}:a and \ref{fig:halfhalf}).

\bigskip
\noindent
In such a situation, we can say that \emph{liquidity calls for liquidity}: the conjunction of a high rate of consuming orders at the smaller queue and of an high arrival rate of liquidity on the same queue feeds an equilibrium.

\bigskip
\noindent
During such a configuration the transaction price is significantly different from the exogenous \emph{fair price}. Since the model is stochastic its state will nevertheless evolve to explore other configuration (see Figure \ref{fig:pd} for a trajectory instance). Nevertheless the form of the invariant measure indicates that the fraction of time during which the model is in such inefficient configurations dominates.

\bigskip
\noindent
When the behaviors of participants are so similar that they create liquidity imbalances, it is often proposed to add a population of market makers (see \cite{ho1983dynamics}), hoping that it will break these ``liquidity circles'' and bring back the invariant measure on the diagonal.
One argues that high frequency traders are a modern version of agents of this kind \cite{citeulike:8423311}. The goal of the next subsection is to study their influence on the invariant measure inside our MFG modelling framework.


\subsection{Introducing High Frequency Traders}

\subsubsection{Equations and quantities for two groups} 
\label{sec:twogroupseq}

The model is such that considering several types of traders is not a big deal. This is good news since our aim is to get insights on the role of High Frequency Traders in the scope of our model.\\
We therefore split the agents into two subsets:
  \begin{itemize}
  \item Institutional Investors, with a smaller intensity $\lambda_1$, but with bigger sizes $q_1 $ and risk aversion $c_1$.
  \item HFTs, with a higher intensity $\lambda_2$ and smaller sizes $q_2$ and risk aversion $c_2$,
  \end{itemize}
  The two groups also differentiate by having specific $\lambda_1^-, \lambda_2 ^-$ (i.e. impatient flows). 
    This leads to twice the value functions we had.\\
We also have to consider HFT's routing decisions $R_{buy}^\oplus(v_2 , x, y+q_2 ):= \one _{v_2 (x,y+q)<p_{q_2}^{buy}(x)}$ (symmetrically $R_{sell}^\oplus)$.\\
Now we write the equation of sellers' value functions $u_1, u_2$. As before, buyers' value function equations can be easily derived by simple symmetry arguments.
\begin{equation}\label{eq:2gps}
    \begin{split}
  \hspace*{.35cm}&    k \cdot u_1(x,y) = \\
  &  \;\;\;\;(\lambda_1 \RS^\ominus(u_1,x+q_1,y)+\lambda_1^-) u_1(x,y-q_1) + \lambda_1 \RS^\oplus(u_1,x+q_1,y) u_1(x+q_1,y) \\
  & + (\lambda_1 \RB^\ominus(v_1,x,y+q_1) + \lambda_1^-) [\frac {q_1} x \PB_1(x)+(1-\frac {q_1} x)u_1(x-q_1,y)] \\
  & + (\lambda_1 \RB^\oplus(v_1,x,y+q_1) u_1(x,y+q_1)) + (\lambda_2 R _{buy}^\oplus(v_2 ,x,y+q_2 ) u_1(x,y+q_2 )\\
  &  +(\lambda_2 R _{sell}^\ominus(u_2 ,x+q_2 ,y)+\lambda_2^-) u_1(x,y-q_2 ) + \lambda_2 R _{sell}^\oplus(u_2 ,x+q_2 ,y) u_1(x+q_2 ,y) \\
  & + (\lambda_2 R _{buy}^\ominus(v_2 ,x,y+q_2 ) + \lambda_2^-) [\frac {q_2 } x  p_2 ^{buy}(x)+(1-\frac {q_2 } x)u_1(x-q_2 ,y)] \\
  & -c_1q_1) ,
  \end{split}
\end{equation}
\begin{equation}\label{eq:2gps_B}
    \begin{split}
  \hspace*{.35cm}&    k \cdot u_2(x,y) = \\
  &  \;\;\;\;(\lambda_1 \RS^\ominus(u_1,x+q_1,y)+\lambda_1^-) u_2(x,y-q_1) + \lambda_1 \RS^\oplus(u_1,x+q_1,y) u_2(x+q_1,y) \\
  & + (\lambda_1 \RB^\ominus(v_1,x,y+q_1) + \lambda_1^-) [\frac {q_1} x \PB_1(x)+(1-\frac {q_1} x)u_2(x-q_1,y)] \\
  & + (\lambda_1 \RB^\oplus(v_1,x,y+q_1) u_2(x,y+q_1)) + (\lambda_2 R _{buy}^\oplus(v_2 ,x,y+q_2 ) u_2(x,y+q_2 ))\\
  &  +(\lambda_2 R _{sell}^\ominus(u_2 ,x+q_2 ,y)+\lambda_2^-) u_2(x,y-q_2 ) + \lambda_2 R _{sell}^\oplus(u_2 ,x+q_2 ,y) u_2(x+q_2 ,y) \\
  & + (\lambda_2 R _{buy}^\ominus(v_2 ,x,y+q_2 ) + \lambda_2^-) [\frac {q_2 } x  p_2 ^{buy}(x)+(1-\frac {q_2 } x)u_2(x-q_2 ,y)] \\
  & - c_2q_2 ,
  \end{split}
\end{equation}
where $k = 2 (\lambda_1+ \lambda_2 + \lambda_1^- + \lambda_2^-)$.\\
At this stage, it is important to remark that the only difference in equations (\ref{eq:2gps}) and (\ref{eq:2gps_B}) is the term $c_1q_1$. As a consequence, $u_1$ and $u_2$ coincide as soon as $c_1q_1 = c_2q_2$, which is the reference case we study in the present work.

We take $c_1 q_1=c_2 q_2$ by purpose: thanks to this choice HFT will not have an exogenous advantage coming from a lower waiting costs. They will have to exploit their size and frequency specificities (they send more orders, of smaller size).

Note that in this case, a first order solution can be explicitly calculated using the methodology of the previous section.

\bigskip
\noindent
In this section we will compare equilibria in terms of average transaction prices and spread. Let us detail the way we define average prices.
Before writing down the average price equations, we need to introduce some notations. For the sake of simplicity we only work with sellers. Buyers notations and equations can be easily derived by symmetry.
\begin{equation*}
\begin{split}
& \mbox{Empirical stationary measure: } \hat m(Q_a,Q_b) \\
&  \mbox{Type's $i$ stationary proportion: } \hat \gamma_i(Q_a,Q_b) 
\end{split}
\end{equation*}

  \begin{figure}[h!]
  \begin{center}
  \includegraphics[width=.85\linewidth,height=.65\linewidth]{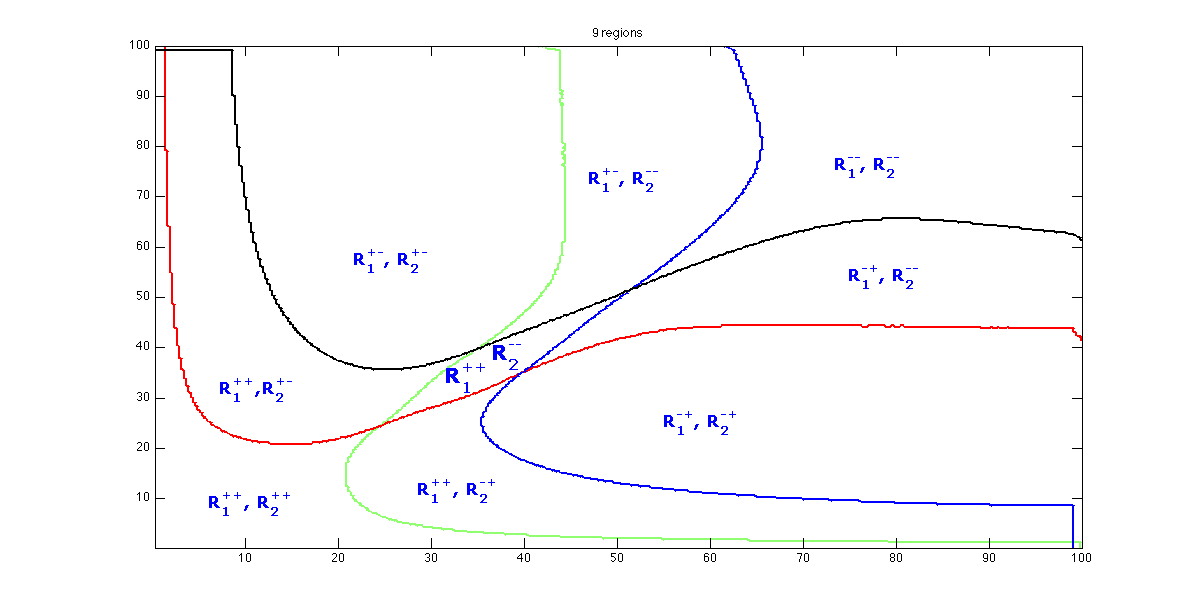}
\end{center} \caption{There are nine regions in terms of trader type (HFT\emph{vs} Institutional Investor) and trader action (LP{} \label{fig:9reg} \emph{vs} LC{})}
\end{figure}
\noindent
As previously, there are several regions defined by the LP{} or LC{} behavior of traders. Figure \ref{fig:9reg} shows an instance with 9 regions. Two cases may happen, depending upon the relative size of $\lambda_i^-,\lambda_i, i=1,2$. We do not want to enter the (technical) details nor review all possible cases, but we would like to mention that a necessary condition for the existence of several regions is that $\sum \lambda_i > \sum \lambda^-_i$, that is there is globally more SOR than non-SOR traders in the system. In the first case, Institutional Investors switch first from LP{} to LC{}. In the second case, HFT switch first (which is the case in Figure \ref{fig:9reg}).
Note that the proportion ${\hat\gamma}_i$ of traders of type $i$ is constant in each region defined in Figure \ref{fig:9reg}. 
Consequently we can define the marginal distribution  of traders of type $i$ as:
\[ \hat m_i(Q_a,Q_b):= \hat \gamma_i(Q_a,Q_b) \hat m(Q_a,Q_b).\]
Now, in each region, there is a certain traded quantity $\xi$. In table 3 we provide the corresponding values of $\xi$ and $\gamma_i$.

\begin{table}[h!]
\begin{center}
\begin{tabular}{|c|c|c|}  \hline &  $\gamma_i$ & $\xi$  \\ \hline
$\mathcal{R}_1\! := \! (R_1^{++},R_2^{++})$ & $\frac{\lambda_i}{\lambda_1+\lambda_2}$ & $ (\lambda_1^- q_1/Q_a, \lambda_2^- q_2/Q_a)$ \\ \hline 
$\mathcal{R}_2\! := \! (R_1^{++},R_2^{--})$ & $\one_{i=1}$ & 
$(\lambda_1^-q_1/Q_a,\Lambda_2 q_2/Q_a)$ \\ \hline
$\mathcal{R}_3\! := \! (R_1^{--},R_2^{--})$ & 0 & 
- \\ \hline
$\mathcal{R}_4\! := \! (R_1^{++},R_2^{-+})$ & $\one_{i=1}$ & 
$ (\lambda_1^- q_1/Q_a, \lambda_2^- q_2/Q_a)$ \\ \hline
$\mathcal{R}_5\! := \! (R_1^{-+},R_2^{+-})$ & 0 & - \\ \hline
$\mathcal{R}_6\! := \! (R_1^{-+},R_2^{--})$ & 0 & - \\ \hline
$\mathcal{R}_7\! := \! (R_1^{++},R_2^{+-})$ & $\frac{\lambda_i}{\lambda_1+\lambda_2}$ & $(\lambda_1^-q_1/Q_a,\Lambda_2 q_2/Q_a)$ \\ \hline
$\mathcal{R}_8\! := \! (R_1^{+-}R_2^{+-})$ & $\frac{\lambda_i}{\lambda_1+\lambda_2}$ & $ (\Lambda_1 q_1/Q_a, \Lambda_2 q_2/Q_a)$ \\ \hline
$\mathcal{R}_9\! := \! (R_1^{+-},R_2^{--})$ & $\one_{i=1}$ & 
$ (\Lambda_1 q_1/Q_a, \Lambda_2 q_2/Q_a)$ \\ \hline
\end{tabular}
\end{center}\caption{Values of various quantities in each of the 9 regions}
\end{table}

\noindent
The general formula of the average prices are:

\bigskip
\noindent
$\bullet$ Type's $i$ LC{} proportion: 
\[M_{s,i}^- := \!\!\! \int_{(Q_a,Q_b) \in R^{-+}_{i} \bigcup R^{--}_{i}} \!\!\!\!\!\!\!\!\!\!\!\!\!\!\!\! \Lambda_i q_i  d\hat m(Q_a,Q_b)+ \!\!\! \int_{(Q_a,Q_b) \in R^{++}_{i} \bigcup R^{+-}_{i}}  \!\!\!\!\!\!\!\!\!\!\!\!\!\!\!\!\! \lambda^-_i q_i  d\hat m(Q_a,Q_b) \]

\noindent
$\bullet$ Type's $i$ LP{} proportion:  
\[ M_{s,i}^+ := \sum_{i=1}^{9}\int_{(Q_a,Q_b) \in \mathcal{R}_i}   \langle \xi(Q_a,Q_b), (1,1) \rangle\; d\hat m_i(Q_a,Q_b) \]

\noindent
$\bullet$ Type's $i$ price for Liquidity Consumer traders:
\[\bar p_{s,i}^- := \Big(\!\! \int_{(Q_a,Q_b) \in R^{-+}_{i} \bigcup R^{--}_{i}} \!\!\!\!\!\!\!\!\!\!\!\!\!\!\!\!\!\!\!\!\!\!\!\!\!\!\!\!\!\!\!\!\!\!\!\!\!\!\!\!\!\! \Lambda_i q_i \PS_{q_i}(Q_b) d\hat m(Q_a,Q_b)+ \!\! \int_{(Q_a,Q_b) \in R^{++}_{i} \bigcup R^{+-}_{i}}  \!\!\!\!\!\!\!\!\!\!\!\!\!\!\!\!\!\!\!\!\!\!\!\!\!\!\!\!\!\!\!\!\!\!\!\!\!\!\!\!\!\!\lambda^-_i q_i \PS_{q_i}(Q_b) d\hat m(Q_a,Q_b)\Big) / M_{s,i}^-\]
\noindent
$\bullet$ Type's $i$ price for Liquidity Provider traders:
\[\bar p_{s,i}^+ := \Big( \sum_{i=1}^{9}\int_{(Q_a,Q_b) \in \mathcal{R}_i}  \!\!\!\!\!\!\!\!\!\!\!\!\!\!\!\!\!\!\!\!\!\!\!\!  \langle \xi(Q_a,Q_b), (\PB_{q_1}(Q_a), \PB_{q_2}(Q_a))\rangle\; d\hat m_i(Q_a,Q_b) \Big)/M_{s,i}^+\]

\noindent
Finally the average price for sellers can be simply deduced:
\[\bar p_{s,i} = \frac{\bar p_{s,i}^- M_{s,i}^- + \bar p_{s,i}^+ M_{s,i}^+}{ M_{s,i}^- + M_{s,i}^+}.\]
  
\subsubsection{ Numerical tests}
 
\begin{figure}[!h]
\subfigure[Test 4: Institutional Investors only]{\includegraphics[width=.8\linewidth]{II_only}} %
\subfigure[Test 5: Institutional Investors and HFT]{\includegraphics[width=.8\linewidth]{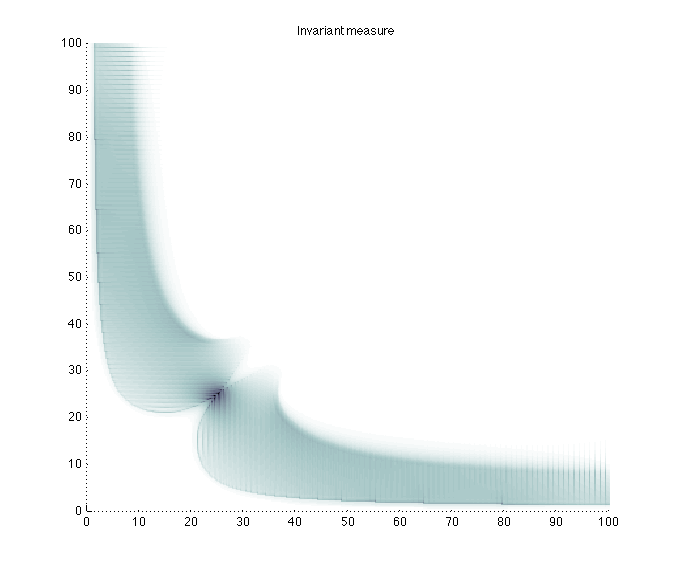}} %
\caption{Comparison of invariant measures:  effect of adding HFT (tests 3 and 5).}
\label{fig:HFTII}
\end{figure}

\begin{figure}[!ht]
\subfigure[Test 1: Institutional Investors only]{\includegraphics[width=.8\linewidth]{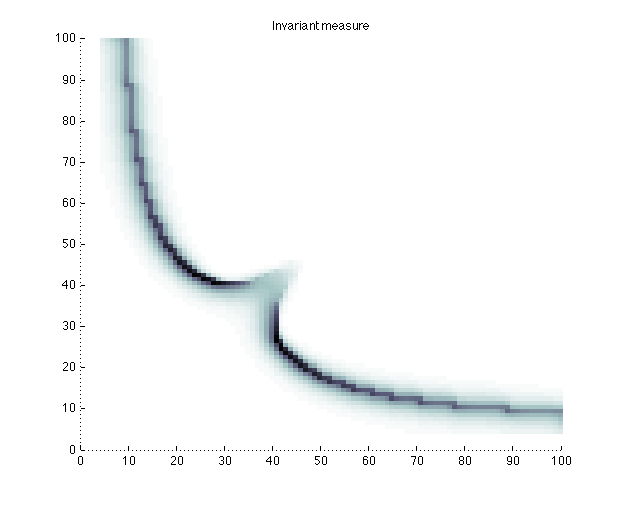}} %
\subfigure[Test 6: Institutional Investors and HFT]{\includegraphics[width=.8\linewidth]{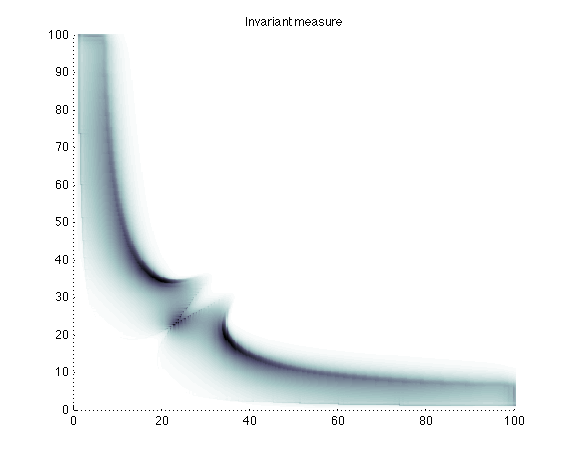}} %
\caption{Comparison of invariant measures:effect of having more impatient Institutional Investors and more impatient HFT (tests 4 and 6).}
\label{fig:HFTII_2}
\end{figure} 
 
\paragraph{Test 5.} Our aim is to model a market opened to HFT, and to observe the effects of the arrival of HFTs. To do so, we consider the following case:
\begin{itemize}
\item HFT order sizes are four times smaller than Institutional Investors orders.
\item All HFT arbitrate between being Liquidity Providers or Liquidity Consumers while half of Institutional Investors are Liquidity Consumers anyway.
\end{itemize}
The selected parameters corresponding to such a market are:
\begin{itemize}
\item General parameters : $\delta = 2, P=100$.
\item Institutional Investors: $\;\Lambda_1= \lambda_1+\lambda^-_1 = \frac 1  2 + \frac 1 2,\; q_1 = 1, \; c_1 = 0.25\%$.
\item HFT: $\Lambda_2= \lambda_2+\lambda^-_2 = 4 + 0, \; q_2 = 0.25, c_2 = 1\%$
\end{itemize}
  
\noindent
Remind that we have chosen to set $c_1 q_1 = c_2 q_2$.\\
Another modeling assumption is that orders are equally split between both types, that is $\Lambda_1 = \Lambda_2$.

\bigskip
\noindent
As in section \ref{sec:FOA}, we can distinguish several cases depending upon the action type of the traders. This lead to 9 distinct regions in the $Q_a \times Q_b$ space,  which are depicted in Figure \ref{fig:9reg}. Note that the upper-index denotes the action ($+$ stands for Liquidity Provider, $-$ for Liquidity Consumer) and the lower-index denotes the type of trader. 
  
We can compare the two following situations: the market stationary equilibrium with a single specie of traders (Institutional Investors), and the market after the arrival of HFTs. \\
Figure \ref{fig:HFTII} shows the corresponding stationary measure of states (size of ask and bid queues). \\
We notice that in the case with Institutional Investors only, there are stable liquidity imbalances with two symmetric configurations (one favorable to buyers, and the other one to sellers). On the other hand, in the case with both Institutional Investors  \& HFT, we observe a liquidity stabilization and a concentration on a single balance equilibrium.

\bigskip
\noindent
Recall that in the present case with $c_1q_1 = c_2q_2$, the value functions of Institutional Investors and HFT coincide. Consequently the existence of nine regions and the stabilizing effect described above are only explained by market impact heterogeneity. Which is a noteworthy numerical result.

\bigskip
\noindent
In Table 4 we display the numerical values of the average transaction prices (only for sellers since the prices for buyers are symmetrical, the \emph{fair price} being 100).

\noindent
We remark that HFT trade at better prices than Institutional Investors and that Institutional Investors average selling price decreases in the market with HFT. Consequently, in this case HFT traders capture the difference, and even more.

\begin{table}[h!]\label{tabep}
\begin{center}
\begin{tabular}{|c|c||c|c|c|} \hline 
 & Test 4 & \multicolumn{3}{|c|}{Test 5} \\\cline{2-5}
& \mbox{II only}  & \mbox{II in the mix} & \mbox{HFT in the mix} & \mbox{mix}  \\\hline
 \mbox{Liquidity Consumers} & 99.849 & 99.842 & 99.938 & 99.89 \\\hline
 \mbox{Liquidity Providers} &100.238 &100.103 &100.189 &100.146 \\\hline
 \mbox{Average} & 99.876 & 99.852 & 99.981 & 99.916 \\\hline
\mbox{Spread $\psi$}& 0.248 & 0.296 & 0.038 & 0.167 \\\hline
\mbox{Spread $\psi$ (bps)}& 25&	30&	4&	17 \\\hline
\end{tabular}
\end{center}
\caption{Expected transaction prices and spread in the model}
\end{table}

\noindent
The last row shows the impact on the spread. In the framework of our model we define the expected bid-ask spread as the difference $\psi=\E(\PB)-\E(\PS)$.
We conclude that the spread increase for Institutional Investors is 20\%, while the global spread decrease of the market is worth 33\%. Consequently, the spread reduction clearly profits to HFTs.

Our simulation results are compatible with spread shrinking scenarios in which while reducing the spread, HFT provoke more impatience among other (slower) types of investors. The result of such a mixing is that fast agents, reducing the spread, are more passive than other class of investors, or aggressive when the spread is at their advantage.

\paragraph{Comparison with empirical data.}
\cite{citeulike:12303370} uses real data to fit observed order flows with coef of a Fokker-Planck like PDE, compatible with our MFG approach. The Figure 9 of this empirical paper exhibits configurations that are close to our theoretical stationary results.

\paragraph{Test 6.} For the sake of completeness we end this section with another example, for other proportions of SOR and non-SOR traders. Here we take the same parameters as in Test 1 except that $\Lambda_1= \lambda_1+\lambda^-_1 = 90\% \times 4 + 10\% \times 4$ and $\Lambda_2= \lambda_2+\lambda^-_2 = 60\% + 40\% $.
Thus we look at a situation where there are 10 points more SOR in the Institutional Investor population and 10 points less SOR amongst HFTs.
In Figure \ref{fig:HFTII_2} we compare the situation between a market with Institutional Investors only and with a mix of both Institutional Investors and HFTs. \\
Table 5 shows the corresponding quantities.

\begin{table}[h!]\label{tabep}
\begin{center}
\begin{tabular}{|c|c||c|c|c|} \hline 
 & Test 4 & \multicolumn{3}{|c|}{Test 6} \\\cline{2-5}
& \mbox{II only}  & \mbox{II in the mix} & \mbox{HFT in the mix} & \mbox{mix}  \\\hline
 \mbox{Liquidity Consumers} & 99.898&	99.854&	99.994&	99.924 \\\hline
 \mbox{Liquidity Providers} & 100.168	&100.094&	100.157&	100.125 \\\hline
 \mbox{Average} & 99.91	&99.864&	99.974&	99.919 \\\hline
\mbox{Spread $\psi$}& 0.180 & 0.272 & 0.052 & 0.162 \\\hline
\mbox{Spread $\psi$ (bps)}&18&	27	&5&	16 \\\hline
\end{tabular}
\end{center}\caption{Expected transaction prices and spread in the model}
\end{table}

\section{Conclusion}

This paper demonstrates how MFG (Mean Filed Games, \cite{citeulike:3614137}) can be used to model orderbook dynamics. At the junction of structural approaches (see \cite{citeulike:6253089} and \cite{fou05}) and flow driven ones (see \cite{citeulike:12303370}), the mean field game render the strategic behaviour of traders, leading to partial derivative equations that can be numerically solved, and partly reduced to simpler dynamics (Section \ref{sec:MFG} introduces mean field games).

The application presented here used a stylized orderbook model in which:
\begin{itemize}
\item each side of the orderbook (buy or sell) is captured by one variable: its size (i.e. the number of orders waiting in each queue);
\item the ``fair'' or ``latent'' price is stable, since we focus on microstructure effects on the traded price;
\item the market impact of a trade is close to linear;
\item the orderbook has an infinite resiliency, in the sense that its shape does not change through time: when liquidity is consumed in one queue, its shape readjusts immediately to a linear one (even if its size changes).
\end{itemize}
The trading strategies of the investors are described by:
\begin{itemize}
\item their arrival rate $\lambda$, following an homogenous Poisson process;
\item the average size of their orders $q$;
\item their waiting cost $c$: the larger they are, the more impatient the investor;
\item a fraction of the orders of investors have an infinite impatience: we call them ``non SOR (Smart Order Router) users'' since they are not patient enough to follow a liquidity-driven microscopic trading strategy.
They can be considered as having an infinite waiting cost.
\end{itemize}

Section \ref{sec:pfp} studies the dynamics of such a model and Section \ref{sec:IIonlysims} shows that such a stylized modelling give birth to realistic dynamics: with one class only of investors, stable states of liquidity imbalance can appear. 
This can be read as a justification for the introduction of the role of market makers.

In a third step of our reasoning, we introduce HFT (High Frequency Traders) with the hope they will assume this market making role. Consistently with \cite{citeulike:8423311} and \cite{citeulike:11858957}, they are modelled as: fast, using smaller orders than institutional investors, and taking decisions according to the immediate state of the orderbook (in our vocabulary, they are ``Smart Order Router''  users). It is important to underline that they have not a different impatience (i.e. waiting cost per share) than other investors.

Section \ref{sec:twogroupseq} extends the approach developed in Section \ref{sec:pfp}  to a model with two types of investors (to be applied to institutional ones and HFT). We then study numerically the properties of markets with institutional investors and HFTs, looking for an answer to regulators and policy makers questions about the effect of mixing two so different classes of market participants.
First note that our results in terms of invariant measure distribution are consistent with data explorations conducted in \cite{citeulike:12303370}.

Qualitatively, our conclusions are that \emph{the introduction  of HFT improves the usual measures of the efficiency of the price formation process}: the stable states of offer and demand are more balanced and the effective bid-ask spread is smaller  than without HFTs.
But \emph{the observed improvement is at the exclusive advantage of the HFTs}: the effective bid-ask spread paid by institutional investors is largest than before the introduction of HFT.
Of course these conclusions are conditioned to the accuracy of our stylized model; nevertheless they can explain the disjunction between the claims of institutional investors (that, for them, the price formation process is more difficult to deal with in presence of HFTs) and the objective improvement of measurements of the state of liquidity since HFT activity increased.

Hence this paper is not only a contribution to the modelling of orderbook dynamics, showing how a MFG-approach can conciliate structural and flow-driven approaches. It provides a qualitative analysis of the role of High Frequency Trading in electronic markets. It also underlines the lack of liquidity measurements adapted to the current market microstructure.


\appendix

\section{Proofs}

\subsection{Proof of Proposition \ref{prop:M0}}
\label{sec:proofM0}


Looking at equations $(A_{R^{++}})$ and $(A_{R^{-+}})$ we notice that at the boundary there is a jump causing a change of sign of the coefficient multiplying the derivatives (under the basic assumption $\lambda \geq \lambda^-$). Therefore, at this point we must have

\begin{equation}\label{eq:firstc}
\begin{split}
\frac {\lambda^-} {x_0} (\frac {\delta}{x_0-q}-\tilde u) = \frac{c}{q}.
\end{split}
\end{equation}
On the other hand, as the seller's routing decision $R^\oplus_s$ jumps from 1 to 0, we must have
\begin{equation}\label{eq:secondc}
\begin{split}
\tilde u = \frac{- \delta}{y_0-q}.
\end{split}
\end{equation}
Combining (\ref{eq:firstc}) and (\ref{eq:secondc}) we get the equality:
\begin{equation*}
\eta x_0 = \frac{1}{x_0-q}+\frac{1}{y_0-q},
\end{equation*}
where (same definition as in Proposition \ref{prop:M0}):
$$\eta := \frac{c}{ \delta q\lambda^-}.$$
It follows that the diagonal point of the boundary $M_0$ is the point (equation \ref{x0diag} of Proposition \ref{prop:M0})
$$(x_0^*,x_0^*) = (q+\sqrt{q^2+8/\eta})/2$$
and that the boundary is defined by the parametric equation of Proposition \ref{prop:M0}:
$$(x_0,y_0) = \Big(x_0,l(x_0):= q+ \Big(\eta x_0-\frac 1 {x_0-q}\Big)^{-1}\Big), \; \forall x_0 \geq x_0^*.$$
\hfill\cqfd~\\

\subsection{Proof of Proposition \ref{prop:M1}}
\label{sec:proofM1}


Unfortunately, looking at equations $(B_{R^{-+}})$ and $(B_{R^{--}})$ we conclude that we cannot adopt the same reasoning since the sign of the coefficients multiplying the derivative terms does not change.\\
We use another strategy. We solve $\tilde v$ analytically all along the characteristic line $y_1=x_1-k$, and then intersect the solution $\tilde v$ with $\frac{\delta}{x_1-q}$.

\bigskip
\noindent
Along the characteristic $x=y+k$, we introduce the function \[ f(y) = \tilde v(y+k,y).\]
Looking at equation (\ref{eq:pden}), we get the generic form of the ordinary differential equation (ODE for short) satisfied by $f$:
\begin{equation}\label{eq:ode}
f'+\frac a y f+\Big( \frac b {y(y-q)}+d\Big)=0,
\end{equation}
where
\begin{equation*}
a =1+\lambda/\lambda^-, \; b = \delta (1+\lambda/\lambda^-), \;  d= -\delta \eta, \; \mbox{on } R^{-+}.
\end{equation*}
We use the variation of constant method to solve equation (\ref{eq:ode}).\\
The homogeneous solution is $f(y)=y^{-a}$ times a constant.
Now let the constant varies as a function $g(x)$. We have $f' = g'y^{-a}-agy^{-a-1}$. Substituting in (\ref{eq:ode}) we obtain:
\[g' (y)= -b \frac{y^{a-1}}{y-q}-d y^a.\] This function is easy to integrate numerically. However in order to stay working with analytical formulas, we make the approximation $y-q \approx y$ for small $q$ (recall that all this analytical part focus on the small $q$ first order approximation). Now we are in the position to integrate the derivative $g'$.\\
We get \[f(y) = g(y) y^{-a} = (\kappa \; y^{-a}-\frac b {a-1} y^{-1} - \frac d {a+1} y) .\]

\bigskip
\noindent
Now we have to compute the constant $\kappa$. Recall that we are working on the line $(y+k,y)$ and on the region $R^{-+}$ so that we are solving the ODE with an initial condition on $M_0$, which is known to be $(x_0,l(x_0))$.\\
Consequently we have to look at $f$ as a family $(f_k)$ of functions indexed by $k \in \mathbb{R}^+$. On the characteristic line starting at $x_0-l(x_0)$, the function is given by
\begin{equation}\label{eq:f}
f_{x_0-l(x_0)}(y) = (\mathcal{C} (x_0) \; y^{-a}-\frac b {a-1} y^{-1} - \frac d {a+1} y), \; \forall y \geq l(x_0) .
\end{equation}
The core argument to compute the constant parameter $\mathcal{C} (x_0)$ for the solution on the characteristic $(y+k,y),$ with $k=x_0-l(x_0)$, is to remark that:

\[f_k(y) = \tilde v(y+k,y) = -\tilde u(y,y+k) =-f_k(y+k).\]
Then, the initial condition equality  \[f_{x_0-l(x_0)}(l(x_0))=-f_{x_0-l(x_0)}(x_0),\]
automatically gives the expression of $\mathcal{C}$:
\begin{equation}\label{eq:C}
\mathcal{C} (x_0) = \delta  \frac{(1+\lambda^-/\lambda) [x_0^{-1}+l(x_0)^{-1}]-\frac{\eta}{1+\lambda/\lambda^-} [x_0+l(x_0)]}
				      		 {x_0^{-(1+\lambda/\lambda^-)}+l(x_0)^{-(1+\lambda/\lambda^-)} } ,
\end{equation}

\noindent
where the last equality holds since the equation of $\tilde u$ on $R^{+-}$ matches the equation of $\tilde v$ on $R^{+-}$.\\
Consequently, the analytical solution is given by (\ref{eq:f})-(\ref{eq:C}).

\bigskip
\noindent
Finally we are in the position to compute the parametric curve of the boundary between the two regions $R^{-+}$ and $R^{--}$.\\
To do so we look for the point $(x_1,y_1) = (y_1+k,y_1)$ such that $\tilde v(x_1,y_1) = \frac \delta {x_1-q} $. \\
More precisely, $M_1$ is defined by: $(y_1+x_0-l(x_0),y_1), \; \forall x_0 \geq x_0^*,$
where (equation \ref{eq:boundM1} of Proposition \ref{prop:M1})
\begin{equation*}
y_1 \mbox{ verifies } f_{x_0-l(x_0)}(y_1) = \frac \delta {y_1+x_0-l(x_0)-q}.
\end{equation*}
\hfill\cqfd~\\

\subsection{Local equations of the four regions (second order equations)}
\label{sec:regions2nd}

Define $\Lambda = \lambda+\lambda^-$. Let us now give the local equations on the same four regions.

\begin{equation}\label{eq:secondordereqs}
\begin{split}
(A_{R^{++}}) \;\;\;0 &= \Big[\frac {\lambda^-} x (p^b(x) -  u) - c \Big]  + [\lambda  -\lambda^-]  (\partial_x   u + \partial_y   u) + q\Big(\frac{\lambda^-}{x } \partial_x   u + \frac{\Lambda}{2} \Delta   u\Big),\\
(B_{R^{++}}) \;\;\;0 &= \Big[\frac {\lambda^-} y( p^s(y)  -  v) + c \Big] + [\lambda  -\lambda^-]  (\partial_x  v+  \partial_y  v)+ q\Big(\frac{\lambda^-}{y } \partial_y   v + \frac{\Lambda}{2} \Delta   v\Big), \\
(A_{R^{-+}}) \;\;\;0 &= \Big[\frac {\lambda^-} x (p^b(x) -  u) - c \Big]  + [  -\lambda^-]  (\partial_x   u + \partial_y   u) + q\Big(\frac{\lambda^-}{x } \partial_x   u + \frac{\lambda^-} 2 \Delta   u+ \lambda \partial_{yy}   u \Big) ,\\
(B_{R^{-+}}) \;\;\;0 &= \Big[\frac {\Lambda} y( p^s(y)  -  v) + c \Big] + [  -\lambda^-]  (\partial_x  v+  \partial_y  v)
+ q\Big(\frac{\Lambda}{y } \partial_y   u + \frac{\lambda^-} 2 \Delta   u+ \lambda \partial_{yy}   v \Big) , \\
(A_{R^{+-}}) \;\;\;0 &= \Big[\frac {\Lambda} x (p^b(x) -  u) - c \Big]  + [  -\lambda^-]  (\partial_x   u + \partial_y   u) + q\Big(\frac{\Lambda}{x } \partial_x   u + \frac{\lambda^-} 2 \Delta   u+ \lambda \partial_{xx}   u \Big) ,\\
(B_{R^{+-}}) \;\;\;0 &= \Big[\frac {\lambda^-} y( p^s(y)  -  v) + c \Big] + [  -\lambda^-]  (\partial_x  v+  \partial_y  v)
+ q\Big(\frac{\lambda^-}{y } \partial_y   u + \frac{\lambda^-} 2 \Delta   u+ \lambda \partial_{xx}   v \Big) , \\
(A_{R^{--}}) \;\;\;0 &= \Big[\frac {\Lambda} x (p^b(x) -  u) - c \Big]  + [-\Lambda]  (\partial_x   u + \partial_y   u) + q\Big(\frac{\Lambda}{x } \partial_x   u + \frac{\Lambda}{2} \Delta   u\Big),\\
(B_{R^{--}}) \;\;\;0 &= \Big[\frac {\Lambda} y( p^s(y)  -  v) + c \Big] + [-\Lambda]  (\partial_x  v+  \partial_y  v)+q\Big(\frac{\Lambda}{y } \partial_y   v + \frac{\Lambda}{2} \Delta   v\Big).
\end{split}
\end{equation}

Where $\Delta$ stands for the Laplacian operator:
$$\Delta f=\partial_{xx} f + \partial_{yy} f.$$

Remark that, compared to equations (\ref{eq:firstordereqs}), both a diffusion term {\bf and a drift term} appear.

\bibliographystyle{apalike}
\bibliography{lehalle}

\end{document}